\setlist[enumerate]{leftmargin=.5in}
\setlist[itemize]{leftmargin=.5in}
\numberwithin{equation}{section}
\theoremstyle{plain}
\newtheorem{theorem}{Theorem}[section]
\newtheorem{proposition}[theorem]{Proposition}
\newtheorem{problem}{Problem}
\newtheorem{model}{Model}
\newtheorem{algo}{Algorithm}
\newtheorem{assumption}{Assumption}[section]
\theoremstyle{definition}
\newtheorem{definition}{Definition}[section]
\theoremstyle{remark}
\newtheorem*{remark}{Remark}
\newcommand{\norm}[1]{\left\|#1\right\|}
\newcommand{\abs}[1]{\left\vert#1\right\vert}
\newcommand{\kl}[1]{\left(#1\right)}
\newcommand{\Kl}[1]{\left\{#1\right\}}
\newcommand{\R}{\mathbb{R}} 
\newcommand{\C}{\mathbb{C}}
\newcommand{\N}{\mathbb{N}}
\newcommand{\Z}{\mathbb{Z}}
\newcommand{\F}{\mathcal{F}}
\newcommand{\FI}{\mathcal{F}^{-1}}
\newcommand{\xv}{\boldsymbol{x}}
\newcommand{\xiv}{\boldsymbol{\xi}}
\newcommand{\sv}{\boldsymbol{s}}
\newcommand{\svjk}{\sv_{j,k}}
\newcommand{\svzz}{\sv_{0,0}}
\newcommand{\tv}{\boldsymbol{t}}
\newcommand{\tvpq}{\tv_{p,q}}
\newcommand{\tvzz}{\tv_{0,0}}
\newcommand{\vvjkpq}{\boldsymbol{v}_{j,k,p,q}}
\newcommand{\phijk}{\phi_{j,k}}
\newcommand{\vphi}{\varphi}
\newcommand{\vphipq}{\vphi_{p,q}}
\newcommand{\Tjk}{T_{j,k}}
\newcommand{\Ttpq}{\tilde{T}_{p,q}}
\newcommand{\Itmpq}{\tilde{I}^m_{p,q}}
\newcommand{\Itmzz}{\tilde{I}^m_{0,0}}
\newcommand{\Dtmpq}{\tilde{D}^m_{p,q}}
\newcommand{\Ojk}{\Omega_{j,k}}
\newcommand{\Ozz}{\Omega_{0,0}}
\newcommand{\Otpq}{\tilde{\Omega}_{p,q}}
\newcommand{\Otzz}{\tilde{\Omega}_{0,0}}
\newcommand{\chitzz}{\tilde{\chi}_{0,0}}
\newcommand{\chitpq}{\tilde{\chi}_{p,q}}
\newcommand{\phitpq}{\tilde{\phi}_{p,q}}
\newcommand{\cjk}{c_{j,k}}
\newcommand{\dpq}{d_{p,q}}
\newcommand{\Djk}{\Delta_{j,k}}
\renewcommand{\Im}{I^m}
\newcommand{\Imzz}{I^m_{0,0}}
\newcommand{\Imjk}{I^m_{j,k}}
\newcommand{\ctjkpq}{\tilde{c}_{j,k}(p,q)}
\newcommand{\LtRt}{{L^2(\R^2)}}
\newcommand{\Rt}{{\R^2}}
\newcommand{\LiRt}{{L^\infty(\R^2)}}
\newcommand{\etajk}{\eta_{j,k}}
\newcommand{\etazz}{\eta_{0,0}}
\newcommand{\etatjkpq}{\tilde{\eta}_{j,k,p,q}}
\newcommand{\etatzzzz}{\tilde{\eta}_{0,0,0,0}}
\newcommand{\gapq}{\gamma_{p,q}}
\newcommand{\gazz}{\gamma_{0,0}}
\newcommand{\epsjk}{\varepsilon_{j,k}}
\newcommand{\dejk}{\delta_{j,k}}
\newcommand{\dezz}{\delta_{0,0}}
\newcommand{\zetapq}{\zeta_{p,q}}
\newcommand{\zetazz}{\zeta_{0,0}}
\newcommand{\Ejk}{\mathcal{E}_{j,k}}
\newcommand{\Etpq}{\tilde{\mathcal{E}}_{p,q}}
\title{Subaperture-based Digital Aberration Correction for OCT: A Novel Mathematical Approach}
\author{
	Simon Hubmer\footnote{Johann Radon Institute Linz, Altenbergerstra{\ss}e~69, 4040 Linz, Austria, (simon.hubmer@ricam.oeaw.ac.at), Corresponding author.} ,
	Ekaterina Sherina\footnote{University of Vienna, Faculty of Mathematics, Oskar Morgenstern-Platz 1, 1090 Vienna, Austria (ekaterina.sherina@univie.ac.at)} ,
	Ronny Ramlau\footnote{Johannes Kepler University Linz, Institute of Industrial Mathematics, Altenbergerstra{\ss}e~69, 4040 Linz, Austria, (ronny.ramlau@jku.at)} \footnote{Johann Radon Institute Linz, Altenbergerstra{\ss}e~69, 4040 Linz, Austria, (ronny.ramlau@ricam.oeaw.ac.at)} , 
	\\
	Michael Pircher\footnote{Medical University of Vienna, Center for Medical Physics and Biomedical Engineering, W\"ahringer G\"urtel~18-20, 1090 Vienna, Austria, (michael.pircher@meduniwien.ac.at)} ,
	Rainer Leitgeb\footnote{Medical University of Vienna, Center for Medical Physics and Biomedical Engineering, W\"ahringer G\"urtel~18-20, 1090 Vienna, Austria, (rainer.leitgeb@meduniwien.ac.at)}}
\begin{document}

\maketitle

\begin{abstract}
In this paper, we consider subaperture-based approaches for the digital aberration correction (DAC) of optical coherence tomography (OCT) images. In particular, we introduce a mathematical framework for describing this class of approaches, leading to new insights for the subaperture-correlation method. Furthermore, we propose a novel DAC approach requiring only minimal statistical assumptions on the spectral phase of the scanned object. Finally, we demonstrate the applicability of our novel DAC approach via numerical examples based on both simulated and experimental OCT data.

\smallskip
\noindent \textbf{Keywords.} 
Digital Aberration Correction, Subaperture-Correlation, Ophthalmology, Mathematical Modelling, Medical Imaging

\end{abstract}


\section{Introduction}\label{sect_introduction}

Optical Coherence Tomography (OCT) has found numerous applications in medical imaging, most importantly in ophthalmology \cite{Drexler_Fujimoto_2015,Leitgeb_Placzek_Rank_Krainz_Haindl_Li_Liu_Unterhuber_Schmoll_Drexler_2021}. Successful functional extensions of OCT are, e.g., OCT angiography \cite{Chen_Wang_2017, Leitgeb_Baumann_2018}, phase sensitive OCT 
\cite{Pircher_Hitzenberger_SchmidtErfurth_2017,deBoer_Hitzenberger_Yasuno_2017}, or OCT elastography \cite{Schmitt_1998, Zaitsev_2021, Krainz_Sherina_Hubmer_Liu_Drexler_Scherzer_2022}. OCT provides three-dimensional tissue imaging non-invasively with high axial resolution and excellent structural contrast. The lateral resolution of OCT is decoupled from the axial one, which is determined purely by the partial coherence properties of the light source, whereas the lateral resolution is determined by the numerical aperture of the imaging optics. High spatial resolution is of prime interest for early diagnosis of ocular diseases, when treatment is still most effective and often irreversible functional impairment can be avoided. E.g., first successful gene therapies have been introduced for retinal diseases, as well as cell replacement therapies. All of them call for three-dimensional cellular resolution imaging capability. In case of retinal imaging, cellular resolution has been demonstrated with adaptive optics (AO) assistance, overcoming the substantial aberrations induced by ocular media \cite{Godara_Dubis_Roorda_Duncan_Carroll_2010}. AO requires the measurement of wavefront aberrations using a wavefront sensor. Those reconstructed aberrations are then used to pre-shape the wavefront using a deformable mirror leading to diffraction limited imaging performance at the ocular fundus. Combining AO with OCT allows for high isotropic resolution of the retina in all three dimensions \cite{Pircher_Zawadzki_2017}. Despite first successful demonstration of AO OCT for retinal diagnostics, additional hardware costs, complex maintenance, and restricted access to AO OCT systems have so far prevented a broader acceptance of the technology. Nevertheless, first clinical prototypes have been introduced recently \cite{Shirazi_Andilla_Lefaudeux_etal_2022}.

Recent demonstrations of high resolution retinal imaging based on OCT signal post-processing alone therefore seemed to be an attractive and cost-effective alternative \cite{Liu_South_Xu_ScottCarney_Boppart_2017,Hillmann_Spahr_Hain_Sudkamp_FrankePfaeffle_Winter_Huettmann_2016,Shemonski_South_Liu_Adie_Scott_Boppart_2015}. Those approaches make use of the complex nature of the OCT signal. The phase information of the OCT signal gives in fact also access to the wavefront aberrations. Various methods have been proposed to extract the wavefront aberrations, being termed digital or computational adaptive optics (DAO, CAO) \cite{Kumar_Drexler_Leitgeb_2013,Adie_Graf_Ahmad_Carney_Boppart_2012} as well as digital aberration correction (DAC). In the following, we will adapt the notion of DAC. In short, the aim of DAC is to find the wavefront aberrations, and to correct recorded OCT image planes by applying a phase conjugated filter function to their pupil or Fourier plane. Subsequent inverse Fourier transformation of the modified pupil plane image leads to an aberration corrected image. This procedure needs to be repeated for each depth plane. Alternatively, if the sample is sufficiently homogeneous, aberration correction only needs to be applied to one plane, the other planes being corrected for known defocus. DAC methods can be roughly categorized into iterative correction methods and non-iterative methods. Iterative methods assume polynomial expansion \cite{Liu_South_Xu_ScottCarney_Boppart_2017}, e.g., by Zernike polynomials, for the correction phase and determine the set of coefficients leading to optimal image sharpness. Non-iterative methods are inspired by the principle of Shack-Hartman wavefront sensing, by determining local wavefront slopes in the pupil plane and by fitting polynomials to estimate the wavefront aberrations. The first of such methods, termed subaperture DAC \cite{Kumar_Drexler_Leitgeb_2013,Kumar_Kamali_Platzer_Unterhuber_Drexler_Leitgeb_2015}, splits the pupil plane into a matrix of subdomains and determines the local wavefront slope for each subdomain. A variant of this method uses digital lateral shearing of the pupil phase to retrieve the local wavefront slope for each pixel (DLS DAO) \cite{Kumar_Georgiev_Salas_Leitgeb_2021}. Both methods have the advantage of being independent of system parameters such as the sample refractive index or of the imaging setup. In principle, DLS DAO provides better sampling and is therefore better suited for higher order aberrations, but it requires specular reflections in the image plane to be present for proper performance.  

In the present paper, we therefore elaborate a mathematical framework for subaperture DAC, which allows both to quantify the limitations of the method as well as leading to strategies to further enhance its performance. In particular, we derive a novel DAC approach which combines the principles of subaperture DAC with a physically motivated averaging.

The outline of this paper is as follows: In Section~\ref{sect_background}, we recall the general physical background of OCT and its image formation process. In Section~\ref{sect_Sub_DAC}, we derive a mathematical formalism for general subaperture-based DAC and consider its implications on SC-DAC. In Section~\ref{sect_novel_DAC}, we propose a novel DAC approach based on our derived formalism, which in Section~\ref{sect_numerics} we test on both simulated and experimental data, before ending with a short conclusion in Section~\ref{sect_conclusion}.

\section{Physical Background}\label{sect_background}

In this section, we review the physical background on swept source based OCT, as well as on the Shack-Hartmann wavefront sensor.

\subsection{Swept Source Based Optical Coherence Tomography}

A typical swept source based OCT system for imaging the retina is depicted in Figure~\ref{fig_SS_OCT_scheme}, (left) with the reconstructed volume being composed of depth- or A-scans at each lateral position $(x,y)$ (right). The volume allows to extract a plane at a given depth referred to as an OCT enface image. In short, the light from a tuneable light source (swept source) is coupled into a single mode fiber based interferometer, where it is split by the first fiber optic beam splitter (FBS1) into sample and reference arm light, respectively. In the sample arm, the light exits the fiber into free space where it is reflected by a scanner (for $x$- and $y$- scanning of the sample). The telescope in the sample arm (lenses L2 and L3) is used to image the pivot point of the scanner onto the pupil plane of the eye for scanning the imaging beam across the retina. The light that is backscattered by the retina traverses the optical media where aberrations are introduced, is then de-scanned by the scanner and back-coupled into the single mode fiber by the fiber-collimator lens L1. In the reference arm the light is coupled out via fiber-collimator lens L4, traverses a variable free-space optical delay line that allows matching the path length difference between sample and reference arm, is coupled back into a single mode fiber via fiber-collimator lens L5, and is brought to interference with light returning from the sample arm in the last fiber beam splitter (FBS2). 

\begin{figure}[ht!]
	\centering
	\includegraphics[width=\textwidth]{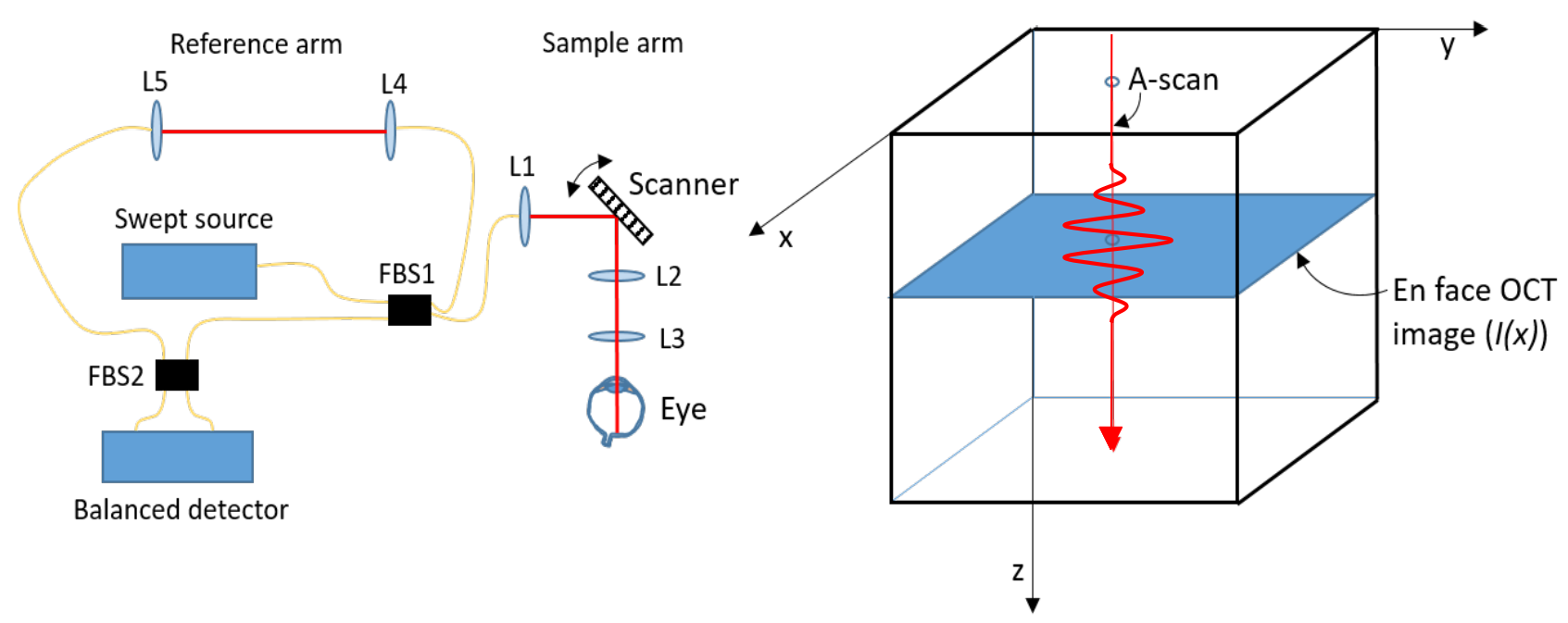}
	\caption{Scheme of a typical optical layout of a swept source based OCT system for retinal imaging (left). L1-L5…lenses. FBS1-2…fiber based beam splitter. Coordinate system used in this work (right)}
	\label{fig_SS_OCT_scheme}
\end{figure}

\begin{figure}[ht!]
	\centering
	\includegraphics[width=\textwidth]{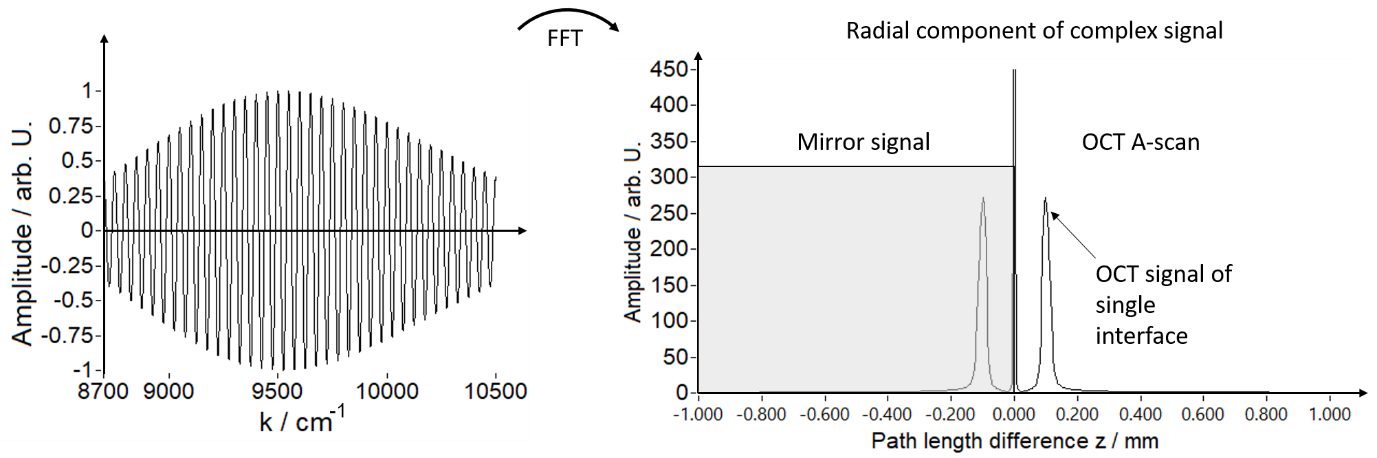}
	\caption{Principle of an A-scan generation in OCT of a single interface located at depth position $0.1$mm. The wavelength of the tuneable light source is rapidly changed over the entire spectral bandwidth of the source (from $950$nm to $1150$nm) and the interferometric signal is recorded by the data acquisition board (left). A depth profile or A-scan is retrieved by fast Fourier transformation of the recorded interferogram (in $k$-space) and by displaying the radial component of the complex valued signal (right). The mirror signal in the grey shaded area (corresponding to negative path length differences) is usually neglected.}
	\label{fig_OCT_Ascan_principle}
\end{figure}

The interferometric signal is then detected by the balanced detector and digitized by a data acquisition board. To generate a depth profile or an A-scan in OCT, the light source is rapidly tuned (swept) over the spectral wavelength band (or wavenumbers $k$) and the corresponding interferometric signal is recorded simultaneously (cf.\ Figure~\ref{fig_OCT_Ascan_principle} left). Successive fast Fourier transformation of this interferometric signal results in the final A-scan (cf.\ Figure~\ref{fig_OCT_Ascan_principle} right). Positive or negative path length differences cannot be distinguished which causes the appearance of a mirror term (cf.\ grey zone in Figure~\ref{fig_OCT_Ascan_principle} right) that is usually neglected by displaying only the positive path length range. For the subsequent discussion, we assume that the wavelength sweep is linear in time and in wavenumber $k (= 1/\lambda)$ space (cf.\ Figure~\ref{fig_OCT_Ascan_principle} left). Note that in the general case, the recorded spectral interference signal might first need to be linearized in $k$, also taking care of possible dispersion unbalance in the interferometer\cite{Wojtkowski2004}. Assuming a single interface in the sample arm with reflectivity $R_S$, the detector current $I$, depending on the time $t$ or wavenumber $k$ linearly tuned in time $t$, can be expressed as \cite{Lexer_Hitzenberger_Fercher_Kulhavy_1997, Choma_Sarunic_Yang_Izatt_2003}
\begin{equation}\label{eq_OCT_1}
	I(k(t)) = C \kl{R_R + R_S(z) + 2 \sqrt{R_R R_S(z)} \cos\kl{2k(t)z+\varphi}} \,,
\end{equation}
where $C$ is a constant determined by the experimental setup (detector responsivity and illumination power), $R_R$ is the reflectivity in the reference arm, $k$ the wavenumber, $z$ denotes the optical path length difference between sample and reference arm, and $\varphi$ is the interferometric phase offset. Fourier transformation of \eqref{eq_OCT_1} with respect to k yields a depth profile or A-scan:
\begin{equation*}\label{eq_OCT_2}
	A(z) = \int I(k(t)) e^{-2\pi i k(t) z} \, dk\,,
\end{equation*}
where $A(z)$ represents a complex valued function. To generate an en-face OCT image ($x$-$y$ imaging plane) at given depth $z$, A-scans are recorded at several lateral positions of the sample and the amplitude and phase of $A(z)$ are retrieved at the position $z$. This complex valued en-face OCT image resulting from the measurement is referred to in the following as $I^m(\xv)$, the superscript $m$ indicating that the experimentally obtained en-face OCT image may additionally be affected by wavefront aberrations.

\subsection{The Shack-Hartmann Wavefront Sensor}\label{subsect_SH_WFS}

In this section, we review some basic properties of the Shack-Hartmann wavefront sensor (SH-WFS) \cite{Ellerbroek_Vogel_2009,Platt_Shack_2001,Primot_2003}, which provides the motivation of the subaperature-based digital aberration correction methods introduced below. First, note that in general the phase or wavefront of light cannot be measured directly, but can be determined from indirect measurements. For example, we saw above that in OCT the phase is encoded in the interference pattern between the sample and the reference field, from which it can be computed; cf.~\cite{Drexler_Fujimoto_2015,Elbau_Mindrinos_Veselka_2023}. Among the many different types of wavefront sensors, the SH-WFS described below is perhaps the most commonly used one. A popular alternative is the pyramid wavefront sensor, where the phase is reconstructed from the intensity patterns resulting from light being focused onto the tip of a pyramidal prism \cite{Ragazzoni_1996, Brunner_Shatokhina_Shirazi_Drexler_Leitgeb_Pollreisz_Hitzenberger_Ramlau_Pircher_2021}.

\begin{figure}[ht!]
	\centering
	\includegraphics[width=\textwidth]{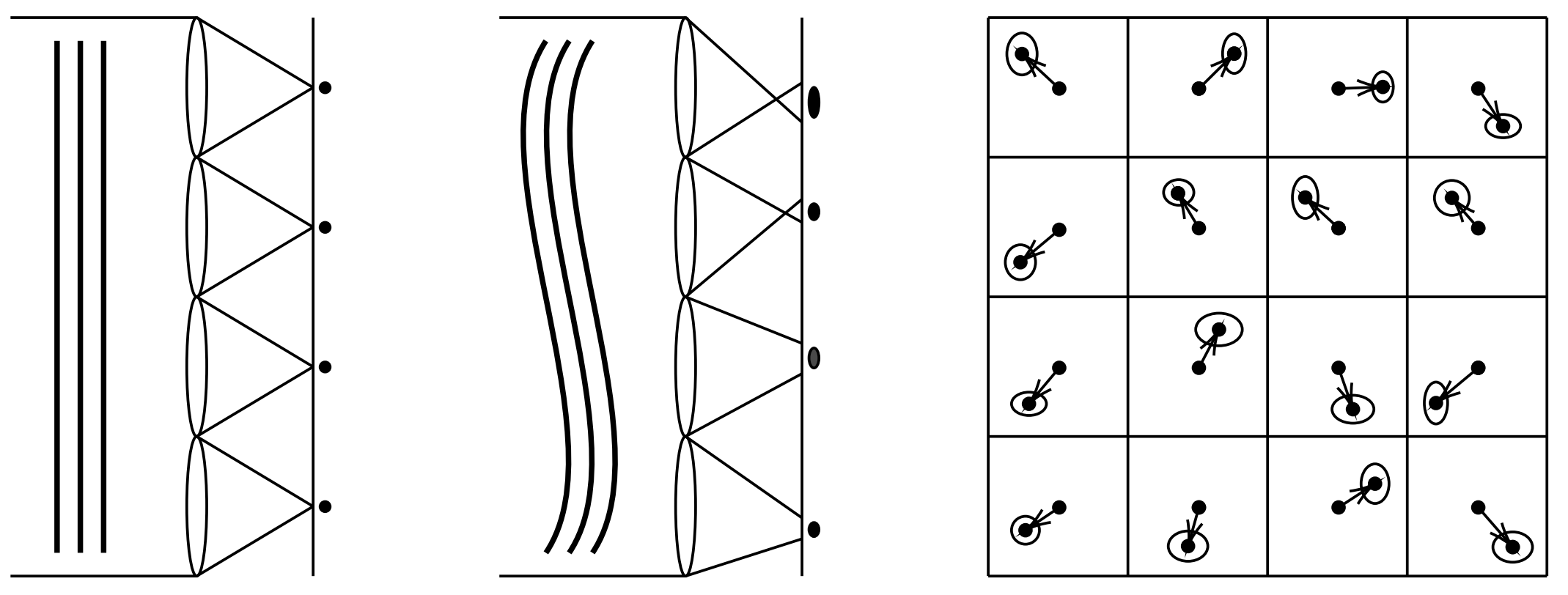}
	\caption{Schematic depiction of the physical principle behind the SH-WFS. Light from a distant point source with a plane (non-aberrated) wavefront is focused by lenslets into centered focal spots (left). If the wavefront is not plane (i.e.\ aberrated), the focal spots are blurred and shifted with respect to their central locations (middle). At the CCD detector, these shifts are detected via a comparison of the centers of mass (right).}
	\label{fig_Shack_Hartmann_WFS}
\end{figure}

The principle of the SH-WFS, schematically depicted in Figure~\ref{fig_Shack_Hartmann_WFS}, can be summarized as follows: The incoming light passes through a quadratic array of small lenses, also called lenslets, which focus the light onto a CCD photon detector, subdivided into a regular grid of so-called subapertures. If the incident light stems from a distant point source with a plane (non-aberrated) wavefront, then each lenslet focuses the light into a focal spot at the center of the corresponding subaperture. However, if the wavefront is not plane (i.e., aberrated), then the focal spots are blurred and shift away from the center of the subapertures. These shifts, typically computed with respect to the centers of mass of the focal spots \cite{Gilles_Ellerbroek_2006}, are measured from the CCD data, and are related to the average gradient (slope) of the incoming wavefront over each subaperture. Hence, from the measured shifts one can reconstruct an approximation of the incoming wavefront, which is itself a non-trivial inverse problem \cite{Roddier_1999} for which many different reconstruction approaches have been proposed. In this paper, we use the fast, stable, and parallelizable Cumulative Reconstructor with
Domain Decomposition (CuReD) 
\cite{Zhariy_Neubauer_Rosensteiner_Ramlau_2011,Rosensteiner_2011_01,Rosensteiner_2011_02}.

\section{Subaperture-based Digital Aberration Correction}\label{sect_Sub_DAC}

In this section, we derive a mathematical formalism for subaperture-based DAC approaches. For this, we first recall some basic results on the 2D Fourier transform, and then introduce a mathematical model which connects complex-valued OCT images to wavefront aberrations. Using this model, we then continue by formalizing the subdomains, subapertures, and subimages considered in subaperture-based DAC approaches, which we then use to formalize and review the previously proposed SC-DAC method. 

Note that in the following we assume that OCT images are complex-valued functions on $\R^2$, which are in practice represented by complex-valued images on a pixel grid.

\subsection{Mathematical Preliminaries}\label{subsect_preliminaries}

The Fourier transform $\F(f)$ of an integrable function $f: \R^2 \to \C$ is defined by
\begin{equation*}
	\F(f)(\xi_1,\xi_2) :=  \int_{-\infty}^{\infty} f(x_1,x_2) e^{-2\pi i \kl{x_1,x_2} \cdot \kl{\xi_1,\xi_2}  } \, d (x_1,x_2) \,.
\end{equation*}
Together with $\xv := (x_1,x_2)$ and $\xiv := (\xi_1,\xi_2)$ this can be written in the compact form 
\begin{equation*}
	\F(f)(\xiv) = \int_{-\infty}^{\infty} f(\xv) e^{-2\pi i \xv \cdot \xiv } \, d \xv \,.
\end{equation*}
It is well-known that the Fourier transform can be continuously extended to a bounded linear operator $\F : \LtRt \to \LtRt$ and that it is continuously invertible by
\begin{equation*}
	\FI( g )(\xv) := \int_{-\infty}^{\infty} g(\xiv) e^{2\pi i \xiv \cdot \xv } \, d \xiv \,.
\end{equation*}
Two useful properties of the Fourier transform are the \emph{translation property}
\begin{equation}\label{prop_translation}
	h(\xv) := f(\xv - \xv_0) 
	\quad
	\Longrightarrow
	\quad
	\F (h)(\xiv) = e^{-2\pi i \xv_0 \cdot \xiv} \F(f)(\xiv) \,,
\end{equation}
as well as the \emph{modulation property}
\begin{equation}\label{prop_modulation}
	h(\xv) := e^{2\pi i \xv \cdot \xiv_0 } f(\xv) 
	\quad
	\Longrightarrow
	\quad
	\F (h)(\xiv) = \F(f)(\xiv - \xiv_0) \,,
\end{equation}
which we will make repeated use of in the upcoming analysis. For proofs of these properties as well as further details we refer to the standard literature on Fourier transforms.

\subsection{Mathematical Model of Image Formation}

In this section, we derive a mathematical model describing the influence of aberrations on the (virtual) image formation process in OCT. The model connects the quantities:
\begin{itemize}
	\item $\Im = \Im(\xv)$ ... the measured complex-valued OCT image, i.e., with aberrations,
	\item $I = I(\xv)$ ... the aberration-free complex-valued OCT image,
	\item $\phi = \phi(\xiv)$ ... the wavefront aberration phase, a real-valued function.
\end{itemize}    
For introducing our mathematical model, it is convenient to make the following
\begin{definition}
	Let $I : \R^2 \to \C$ and $\Im : \R^2 \to \C$. Then for all $\xiv \in \R^2$ we define the Fourier coefficients of $I$ and $\Im$ by
	\begin{equation*}
		D := D(\xiv) := \F(I)(\xiv) \,,
		\qquad
		\text{and}
		\qquad
		D^m := D^m(\xiv) := \F(\Im)(\xiv) \,. 
	\end{equation*}
	\label{def_ddm}
\end{definition}

\begin{figure}[ht!]
	\centering\includegraphics[width=\textwidth]{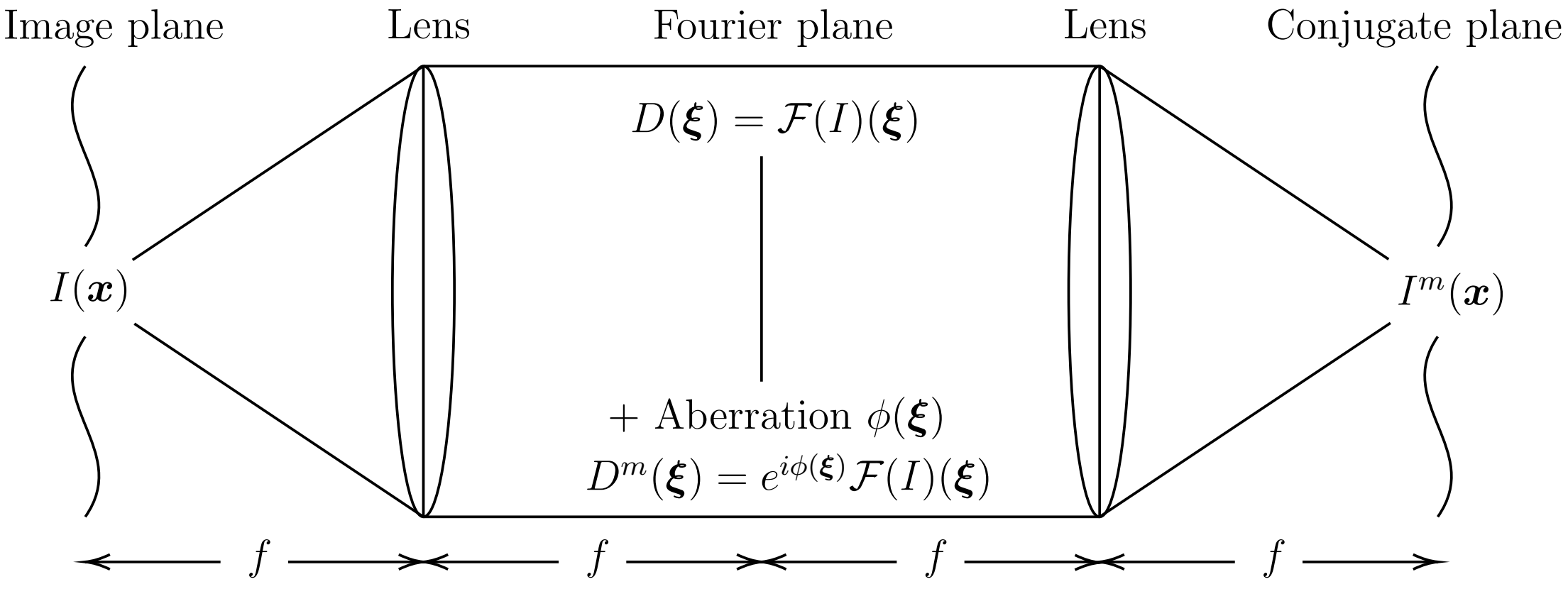}
	\caption{Schematic depiction of the virtual image formation model \eqref{model}.}
	\label{fig_image_formation_model}
\end{figure}

As discussed in Section~\ref{sect_background}, wavefront aberrations emerge throughout the OCT scanning system. Hence, one way to derive a complete mathematical model of the connection between OCT images and aberrations is to combine the different components in an OCT system into a single mathematical model. However, this is undesirable, since it would require the model to be adapted for every different OCT scanning setup and in general would also lead to a highly complex model. Hence, in this paper we opt for the following simpler model: It is known \cite{Goodman_2005} that mathematically every imaging system can be simplified to a system containing two properly aligned lenses; cf.~Figure~\ref{fig_image_formation_model}. In this paper, we assume that wavefront aberrations occur predominantly as phase shifts in the Fourier plane; again cf.~Figure~\ref{fig_image_formation_model}. Mathematically, this can be expressed as
\begin{equation}\label{eq_Dm_D}
	D^m(\xiv) = e^{i \phi(\xiv)} D(\xiv) \,.
\end{equation}    
Hence, we obtain the following mathematical model:
\begin{model}\label{model_main}
	The complex-valued functions $\Im \in \LtRt$ and $I \in \LtRt$ representing the measured and the aberration-free OCT images, respectively, are assumed to be connected via
	\begin{equation}\label{model}
		\Im(\xv) = \FI\kl{ e^{i \phi(\xiv)} \F(I)(\xiv) }(\xv) \,, 
		\qquad
		\forall \, \xv \in \R^2 \,,
	\end{equation}
	where $\phi \in \LtRt$ is a real-valued function representing the wavefront-aberrations.
\end{model}

Note that by rearranging the model equation \eqref{model} we obtain
\begin{equation}\label{image_correction}
	I(\xv) = \FI\kl{ e^{- i \phi(\xiv)} \F\kl{\Im}(\xiv) }(\xv) \,,
	\qquad
	\forall \, \xv \in \R^2 \,.
\end{equation}
Hence, given an estimate of the wavefront aberration $\phi$, the aberration-free image $I$ can be reconstructed from the measured image $\Im$. The aim of DAC is thus to find an estimate of $\phi$ and use it to correct $\Im$ via \eqref{image_correction}. Hence, the DAC problem can be summarized as follows: 

\begin{problem}[DAC]\label{problem_DAC}
Given a measured, complex-valued OCT image $\Im$ and the Model~\ref{model_main} connecting it to the aberration-free OCT image $I$, the general DAC problem consists in determining the wavefront aberration $\phi$ and using it to correct the measured image $\Im$ via \eqref{image_correction}.
\end{problem}

As can be seen from \eqref{model}, solving the general DAC Problem~\ref{problem_DAC} is impossible without imposing additional assumptions, since the measured image $\Im$ depends both on the aberration-free image $I$ and the wavefront aberration $\phi$. However, this is also the case in hardware-based aberration correction, since the spectral phase of the scanned object and the wavefront aberration interact; cf.~Section~\ref{sect_background} and in particular Section~\ref{subsect_novel_DAC_motivation} below. Nevertheless, wavefront sensors such as the SH-WFS are successfully used for image correction in OCT. Hence, in the following we consider the general approach of subaperture-based DAC for solving Problem~\ref{problem_DAC}, which is based on the physical behaviour of the SH-WFS; cf.~Section~\ref{subsect_SH_WFS}. This approach is based on a subdivison of the Fourier domain into subdomains modelling the subapertures of a SH-WFS, and comparing the resulting images to estimate the wavefront aberration $\phi$. As for the SH-WFS, the choice of the number and size of these subapertures depends on the general imaging conditions and on resolution restrictions, which we will return to in Section~\ref{sect_numerics}.

\subsection{Subdomains/Subapertures and Subimages}\label{subsect_sub_01}

In this section, we introduce various quantities required for subaperture-based DAC approaches. These can be considered purely from a mathematical point of view, but they also have a motivation stemming from the SH-WFS, see Section~\ref{subsect_SH_WFS}, which we now present before giving precise mathematical definitions below. In particular, we define subdomains $\Ojk$ which form a regular rectangular tiling of $\R^2$ and correspond to the subapertures of a SH-WFS, as well as shift operators $\Tjk$ translating points in the central subdomain $\Omega_{0,0}$ to corresponding points in the subdomains $\Ojk$; cf.~Figure~\ref{fig_subdomains}. Furthermore, we consider the indicator functions $\chi_{j,k}$ of these domains, and use them to define subimages $\Imjk$ corresponding to the images which are obtained when light passes through individual lenslets of a SH-WFS; cf.~Figure~\ref{fig_subimage_definition}. Finally, we define the local wavefront aberrations $\phijk$ as the restrictions of the wavefront aberration $\phi$ to the different subdomains $\Ojk$. The average slopes of these local aberrations $\phijk$ are what is measured by the SH-WFS.

All of the above quantities are required for the mathematical formulation and analysis of the subaperture-based DAC approaches considered below. Hence, we now start by defining subdomains representing (virtual) subapertures; see also Figure~\ref{fig_subdomains}.

\begin{definition}\label{def_subdomains}
	For $J, K \in \R^+$ we define the subdomain/subaperture
	\begin{equation*}
		\Omega_{0,0} := [ -J/2, J/2 ) \times [-K/2, K/2)  \subset \R^2 \,,
	\end{equation*}
	as well as the shifted subdomains/subapertures
	\begin{equation*}
		\Ojk := \Tjk(\Omega_{0,0}) 
		= 
		\Kl{\xiv \in \R^2 \, \vert \, \Tjk^{-1}(\xiv) \in \Omega_{0,0}} \,, 
		\qquad
		\forall \, j,k\in \Z \,,
	\end{equation*}
	where the translation operator $\Tjk$ is defined by
	\begin{equation}\label{def_Tjk}
		\Tjk(\xiv) := (\xi_1 + jJ, \xi_2 + k K) \,,
		\qquad
		\forall \, \xiv \in \R^2 \,, \quad \forall \, j,k \in \Z   \,.
	\end{equation}
\end{definition}

\begin{figure}[ht!]
	\centering
	\includegraphics[width=\textwidth]{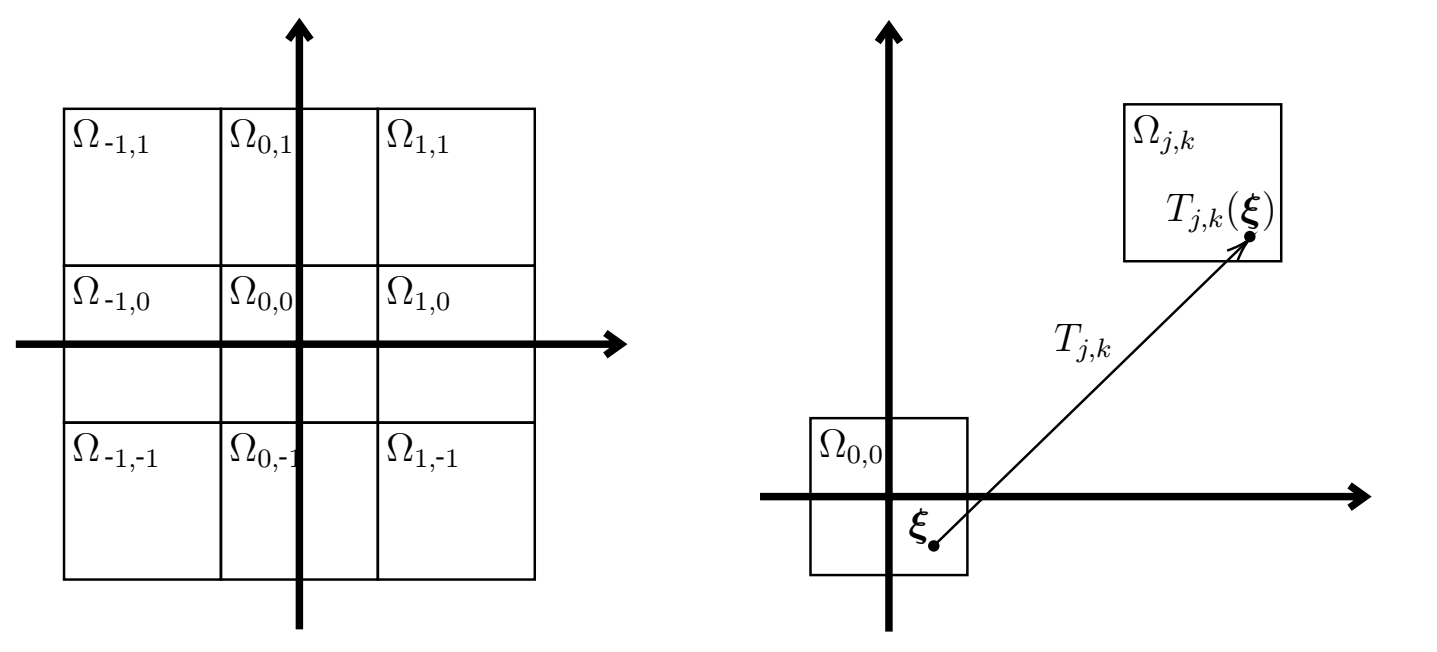}
	\caption{Illustration of subdomains $\Ojk$ (left) and shift operator $\Tjk$ (right).}
	\label{fig_subdomains}
\end{figure}

As the name suggests, we now use the subdomains $\Ojk$ as representations of (virtual) subapertures in the Fourier plane for splitting up the OCT image $I^m$ into different (virtual) subimages $\Imjk$. More precisely, for all $j,k\in\Z$ and $\xiv \in \R^2$ let
\begin{equation*}
	\chi_{j,k}(\xiv) 
	:=
	\begin{cases}
		1 \,, & \xiv \in \Ojk \,,
		\\
		0 \,, & \text{else} \,,
	\end{cases}
\end{equation*}
denote the indicator functions of the subdomains $\Ojk$. Then the effect of a subaperture in the Fourier plane on the Fourier coefficients $D^m$ precisely amounts to the ``cut-outs'':
\begin{equation}\label{def_Dmjk}
	D^m_{j,k}(\xiv) :=  \chi_{j,k}(\xiv) D^m(\xiv)  \overset{\eqref{eq_Dm_D}}{=} \chi_{j,k}(\xiv) e^{i \phi(\xiv)} \F(I)(\xiv) \,,
	\qquad
	\forall \, \xiv \in \R^2 \,.
\end{equation}
The corresponding (virtual) subimages can now be defined as follows:
\begin{definition}
	For each $j,k \in \Z$ the subimages $\Imjk$ are defined via
	\begin{equation}\label{def_Imjk}
		\Imjk(\xv) := \FI( D^m_{j,k}( \Tjk(\xiv)))(\xv)\,,
		\qquad
		\forall \xv \in \R^2 \,,
	\end{equation}
	where the cut-out Fourier data $D^m_{j,k}$ are given by \eqref{def_Dmjk}. 
\end{definition}

\begin{figure}
	\centering
	\includegraphics[width=\textwidth]{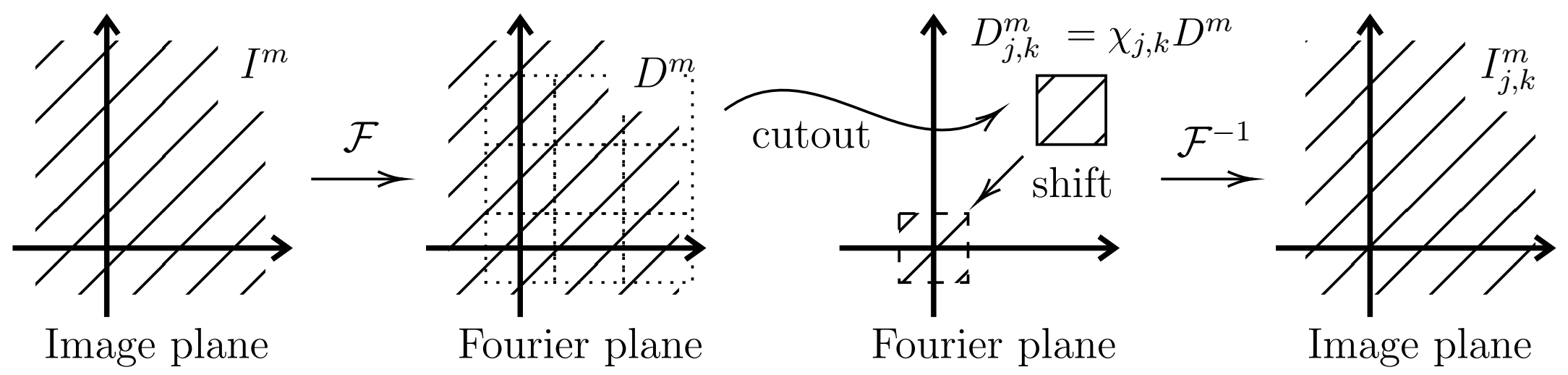}
	\caption{Illustration of the definition of the subimages $\Imjk$ given in \eqref{def_Imjk}.}
	\label{fig_subimage_definition}
\end{figure}

For a better understanding, the definition of the subimages $\Imjk$ is illustrated in Figure~\ref{fig_subimage_definition}. Their definition mimicks the image formation process of light being focused through the different lenslets of the SH-WFS. Note that the translation operator $\Tjk(\xiv)$ appears in \eqref{def_Dmjk} to model the required shift of the Fourier coefficients to the center of the domain. The above definition implies

\begin{proposition}\label{prop_subimages}
	Let the subimages $\Imjk$ be defined as in \eqref{def_Imjk} and let 
	\begin{equation}\label{def_phijk}
		\phijk(\xiv) := \phi(\Tjk(\xiv)) \,,
		\qquad
		\forall \, \xiv \in \Omega_{0,0} \,, \quad
		\forall \, j,k \in \Z \,,
	\end{equation}
	be the local wavefront aberrations. Then for all $\xv \in \R^2$ we obtain the representation
	\begin{equation}\label{Imjk_exp}
		\Imjk(\xv) = \FI\kl{\chi_{0,0}(\xiv) e^{i \phijk(\xiv)} \F(I)(\Tjk(\xiv)) }(\xv) \,.
	\end{equation}
\end{proposition}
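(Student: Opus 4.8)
The plan is to prove \eqref{Imjk_exp} by direct substitution, unwinding the definitions of $\Imjk$, $D^m_{j,k}$, and $\phijk$, with the only non-cosmetic point being the interaction between the shift operator $\Tjk$ and the indicator function $\chi_{j,k}$. First I would start from the definition \eqref{def_Imjk}, namely $\Imjk(\xv) = \FI(D^m_{j,k}(\Tjk(\xiv)))(\xv)$, and insert the explicit form of the cut-out Fourier data from \eqref{def_Dmjk}, evaluated at the shifted argument $\Tjk(\xiv)$:
\begin{equation*}
	D^m_{j,k}(\Tjk(\xiv)) = \chi_{j,k}(\Tjk(\xiv))\, e^{i \phi(\Tjk(\xiv))}\, \F(I)(\Tjk(\xiv)) \,,
	\qquad \forall\, \xiv \in \R^2 \,.
\end{equation*}
Here one must be careful that the right-hand side is understood as a composition of functions in the dummy variable $\xiv$, to which $\FI$ is then applied.

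The key step is to identify $\chi_{j,k}(\Tjk(\xiv)) = \chi_{0,0}(\xiv)$ for all $\xiv \in \R^2$. This follows directly from Definition~\ref{def_subdomains}: by the characterization $\Ojk = \Kl{\xiv \in \R^2 \,\vert\, \Tjk^{-1}(\xiv) \in \Ozz}$, we have $\Tjk(\xiv) \in \Ojk$ if and only if $\Tjk^{-1}(\Tjk(\xiv)) = \xiv \in \Ozz$, so the two indicator functions agree pointwise. Next I would observe that by the definition \eqref{def_phijk} of the local wavefront aberrations, $\phi(\Tjk(\xiv)) = \phijk(\xiv)$ holds for every $\xiv \in \Ozz$; for $\xiv \notin \Ozz$ the factor $\chi_{0,0}(\xiv)$ vanishes, so the (immaterial) values of $\phijk$ outside $\Ozz$ play no role, and we may freely replace $e^{i\phi(\Tjk(\xiv))}$ by $e^{i\phijk(\xiv)}$ inside the product. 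Combining these two observations yields
\begin{equation*}
	D^m_{j,k}(\Tjk(\xiv)) = \chi_{0,0}(\xiv)\, e^{i \phijk(\xiv)}\, \F(I)(\Tjk(\xiv)) \,,
\end{equation*}
and applying $\FI$ to both sides gives exactly \eqref{Imjk_exp}.

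I do not anticipate a genuine obstacle here; the argument is essentially bookkeeping. The only point requiring a word of care is the one flagged above, that $\phijk$ is a priori only defined on $\Ozz$, so the substitution $e^{i\phi(\Tjk(\xiv))} \to e^{i\phijk(\xiv)}$ is legitimate only after multiplying by $\chi_{0,0}$; this should be stated explicitly. If one wishes to be fully rigorous about the $\LtRt$ framework, one can additionally remark that $\chi_{0,0} e^{i\phijk} \F(I)(\Tjk(\cdot))$ is a measurable function bounded in modulus by $|\F(I)(\Tjk(\cdot))| \in \LtRt$ (since $\F : \LtRt \to \LtRt$ and translation is an isometry on $\LtRt$), so that the inverse Fourier transform on the right-hand side of \eqref{Imjk_exp} is well-defined, matching the well-definedness already implicit in \eqref{def_Imjk}.
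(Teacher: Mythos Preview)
Your proposal is correct and follows essentially the same route as the paper's proof: substitute the definition of $D^m_{j,k}$ into \eqref{def_Imjk}, use $\chi_{j,k}(\Tjk(\xiv)) = \chi_{0,0}(\xiv)$, and then replace $\phi(\Tjk(\xiv))$ by $\phijk(\xiv)$. Your extra remarks about the substitution being licensed by the factor $\chi_{0,0}$ and about $\LtRt$ well-definedness are not in the paper's proof but are sound and do not change the approach.
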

\begin{proof}
	Due to the definitions \eqref{def_Dmjk} and \eqref{def_Imjk} of $D^m_{j,k}$ and $\Imjk$, respectively, we have
	\begin{equation*}
		\Imjk(\xv) = \FI(\chi_{j,k}(\Tjk(\xiv)) e^{i \phi(\Tjk(\xiv))} \F(I)(\Tjk(\xiv)) )(\xv) \,.
	\end{equation*}
	Now, since by the definition of $\chi_{j,k}$ we have
	\begin{equation*}
		\chi_{j,k}(\Tjk(\xiv)) = \chi_{0,0}(\xiv) \,,
	\end{equation*}
	it follows that 
	\begin{equation}\label{helper_1}
		\Imjk(\xv) = \FI(\chi_{0,0}(\xiv) e^{i \phi(\Tjk(\xiv))} \F(I)(\Tjk(\xiv)) )(\xv) \,,
	\end{equation}
	which together with the definition \eqref{def_phijk} of $\phijk$ yields the assertion.
\end{proof}

Note that if there are no aberrations, i.e., if $\phi(\xiv) \equiv 0$, then \eqref{Imjk_exp} implies
\begin{equation*}
	\Imjk(\xv) = \FI\kl{\chi_{0,0}(\xiv) \F(I)(\Tjk(\xiv)) }(\xv)  \,.
\end{equation*}
Hence, for the central subimage, i.e., for $j=k=0$, we obtain
\begin{equation*}
	\Imzz(\xv) = \FI\kl{\chi_{0,0}(\xiv)  \F \kl{I}(\xiv)}(\xv) \,,
\end{equation*}
which is a low-resolution version of the original image $I$. For all other $j,k\neq 0$, the presence of the shift operator $\Tjk$ influences the resulting subimage $\Imjk$. However, note that if $I$ is an OCT image of a single distant point-source, i.e., $I(x) = \delta(x)$ with $\delta$ denoting the delta-distribution, then $\F(I) \equiv  1$. Hence, in this case the shift operator $\Tjk$ is not relevant, and thus all subimages $\Imjk$ are low-resolution versions of the image $I$. Furthermore, as we shall see below, in case of aberrations, i.e., if $\phi(\xiv) \not\equiv 0$, these subimages are shifted depending mainly on the average slope of the wavefront aberration $\phi$ on the subaperture, i.e., of $\phijk$. This fact is leveraged by the subaperture-correlation method.

\subsection{The Subaperture-Correlation Method}

In this section, we revisit the subaperture-correlation method of \cite{Kumar_Drexler_Leitgeb_2013,Kumar_Wurster_Salas_Ginner_Drexler_Leitgeb_2017}. The physical assumptions underlying this method translate to the following mathematical

\begin{assumption}\label{ass_corr}
	On each subaperture $\Ojk$, the wavefront aberration $\phi$ can be reasonably well approximated by a linear function, i.e., for all $j,k \in \Z$ there holds
	\begin{equation}\label{ass_phijk_linear}
		\phijk(\xiv) =  2\pi \kl{ \cjk +  \svjk \cdot \xiv + \etajk(\xiv)} \,,
		\qquad
		\forall \, \xiv \in \Omega_{0,0} \,,
	\end{equation}
    where the functions $\etajk$ are small perturbations. Furthermore, the Fourier information is \emph{self-similar}, i.e., for all $j,k \in \Z$ and with small perturbation functions $\dejk$ there holds
	\begin{equation}\label{ass_F_selfsimilar}
		\F(I)(\Tjk(\xiv)) = \F(I)(\xiv) + \dejk(\xiv)\,,
		\qquad
		\forall \, \xiv \in \Omega_{0,0} \,.
	\end{equation}
\end{assumption}
Using these assumptions, we are able to derive the following

\begin{proposition}\label{prop_subimages_old}
Let $\Imjk$ be defined as in \eqref{def_Imjk} and let Assumption~\ref{ass_corr} hold. Then 
	\begin{equation}\label{subimages_shifted}
		\abs{ \Imjk(\xv - \svjk)} 
        = 
        \abs{ \Imzz(\xv - \svzz)} 
        + \Ejk(\xv)\,,
		\qquad
		\forall \, j,k\in\Z \,,
	\end{equation}
where the perturbation functions $\Ejk$ are given by
    \begin{equation}\label{eq_Dejk}
    \begin{split}
        \Ejk(\xv) &:= 
        \abs{\FI\kl{\chi_{0,0}(\xiv) e^{2\pi i \etajk(\xiv)} \kl{ \F\kl{I}(\xiv) + \dejk(\xiv)}}(\xv)}
        \\
        & \qquad -
        \abs{\FI\kl{\chi_{0,0}(\xiv) e^{2\pi i \etazz(\xiv)} \kl{ \F\kl{I}(\xiv) + \dezz(\xiv)}}(\xv)} \,.
    \end{split}
    \end{equation}
Furthermore, the following error estimate holds:
    \begin{equation}\label{est_Dejk}
    \begin{split}
        & \norm{ \abs{ \Imjk(\cdot - \svjk)} - \abs{\Imzz(\cdot - \svzz)}  }_{\LiRt} 
        = \norm{\Ejk}_{\LiRt}
        \\
        & \qquad
        \leq
        2 \pi \norm{\F(I) }_{L^1(\Ozz)} \norm{\etajk - \etazz}_{L^\infty(\Ozz)}
        +
        \norm{\dejk }_{L^1(\Ozz)} + \norm{\dezz }_{L^1(\Ozz)}  \,.
    \end{split}
    \end{equation}
\end{proposition}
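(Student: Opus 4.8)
The plan is to start from the representation \eqref{Imjk_exp} established in Proposition~\ref{prop_subimages} and substitute the two structural hypotheses of Assumption~\ref{ass_corr}. Writing $e^{i\phijk(\xiv)} = e^{2\pi i \cjk}\,e^{2\pi i \svjk\cdot\xiv}\,e^{2\pi i \etajk(\xiv)}$ as dictated by \eqref{ass_phijk_linear}, and replacing $\F(I)(\Tjk(\xiv))$ by $\F(I)(\xiv) + \dejk(\xiv)$ via \eqref{ass_F_selfsimilar}, equation \eqref{Imjk_exp} turns into
\[
	\Imjk(\xv) = e^{2\pi i \cjk}\,\FI\kl{\chi_{0,0}(\xiv)\,e^{2\pi i \svjk\cdot\xiv}\,e^{2\pi i \etajk(\xiv)}\kl{\F(I)(\xiv) + \dejk(\xiv)}}(\xv)\,.
\]
The next step is to absorb the linear phase $e^{2\pi i \svjk\cdot\xiv}$: by the translation property \eqref{prop_translation} (used in the form appropriate for $\FI$, i.e.\ the inverse-transform analogue of the modulation property \eqref{prop_modulation}), multiplication by $e^{2\pi i \svjk\cdot\xiv}$ under $\FI$ corresponds to the spatial shift $\xv \mapsto \xv + \svjk$, so that evaluating at $\xv - \svjk$ cancels this factor. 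Since $\cjk \in \R$, the constant $e^{2\pi i \cjk}$ is unimodular and thus disappears upon taking absolute values, leaving
\[
	\abs{\Imjk(\xv - \svjk)} = \abs{\FI\kl{\chi_{0,0}(\xiv)\,e^{2\pi i \etajk(\xiv)}\kl{\F(I)(\xiv) + \dejk(\xiv)}}(\xv)}\,.
\]
Specializing this identity to $j = k = 0$ and subtracting yields precisely \eqref{subimages_shifted} with $\Ejk$ given by \eqref{eq_Dejk}, which settles the representation part.

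For the error estimate \eqref{est_Dejk}, I would first apply the reverse triangle inequality $\bigl|\abs{a}-\abs{b}\bigr| \le \abs{a-b}$ to the two terms defining $\Ejk(\xv)$ in \eqref{eq_Dejk}, then use the linearity of $\FI$ to write their difference as a single inverse transform, and finally invoke the elementary bound $\abs{\FI(g)(\xv)} \le \norm{g}_{L^1(\R^2)}$ together with the fact that $\chi_{0,0}$ is the indicator of $\Ozz$. This gives
\[
	\norm{\Ejk}_{\LiRt} \le \norm{e^{2\pi i \etajk}\kl{\F(I)+\dejk} - e^{2\pi i \etazz}\kl{\F(I)+\dezz}}_{L^1(\Ozz)}\,.
\]
It then remains to split the integrand as $\kl{e^{2\pi i \etajk}-e^{2\pi i \etazz}}\F(I) + e^{2\pi i \etajk}\dejk - e^{2\pi i \etazz}\dezz$, apply the triangle inequality for the $L^1(\Ozz)$-norm, use $\abs{e^{2\pi i \etajk}} = \abs{e^{2\pi i \etazz}} = 1$ to reduce the last two terms to $\norm{\dejk}_{L^1(\Ozz)}$ and $\norm{\dezz}_{L^1(\Ozz)}$, and bound the first term by means of the elementary estimate $\abs{e^{ia}-e^{ib}} \le \abs{a-b}$ for real $a,b$, which produces the factor $2\pi\norm{\etajk-\etazz}_{L^\infty(\Ozz)}$ after pulling $\norm{\F(I)}_{L^1(\Ozz)}$ out of the integral. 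Collecting these contributions gives \eqref{est_Dejk}.

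I do not expect any serious obstacle; the argument is essentially bookkeeping built on Proposition~\ref{prop_subimages}, the Fourier translation/modulation properties, and two elementary inequalities. The only points demanding a little care are (i) getting the direction of the spatial shift induced by the linear phase $e^{2\pi i \svjk\cdot\xiv}$ right, so that the normalization $\xv \mapsto \xv - \svjk$ indeed matches \eqref{subimages_shifted}, and (ii) confirming that the scalars $\cjk$ and the functions $\etajk$ are real --- which holds because $\phijk$ is real-valued and the decomposition in \eqref{ass_phijk_linear} is into real quantities --- so that $e^{2\pi i \cjk}$ really is unimodular and $\abs{e^{2\pi i \etajk}} \equiv 1$ on $\Ozz$.
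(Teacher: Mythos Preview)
Your proposal is correct and follows essentially the same route as the paper: start from \eqref{Imjk_exp}, insert \eqref{ass_phijk_linear} and \eqref{ass_F_selfsimilar}, use the modulation property to convert the linear phase into the spatial shift by $\svjk$, drop the unimodular constant upon taking absolute values, and then specialize to $j=k=0$; for \eqref{est_Dejk} the paper likewise applies the reverse triangle inequality, splits off the $\F(I)$-term from the $\dejk,\dezz$-terms, and bounds the exponential difference via $\abs{e^{2\pi i(\etajk-\etazz)}-1}\le 2\pi\abs{\etajk-\etazz}$, which is your inequality $\abs{e^{ia}-e^{ib}}\le\abs{a-b}$ in equivalent form.
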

\begin{proof}
	First, note that by using \eqref{ass_phijk_linear} in \eqref{Imjk_exp} it follows that
	\begin{equation*}
		\begin{split}
			\Imjk(\xv) 
			&=
			\FI\kl{\chi_{0,0}(\xiv) e^{i \phijk(\xiv)} \F(I)(\Tjk(\xiv)) }(\xv)
			\\
			& = 
			\FI\kl{\chi_{0,0}(\xiv) e^{2\pi i \kl{ \cjk + \svjk \cdot \xiv + \etajk(\xiv) }} \F(I)(\Tjk(\xiv)) }(\xv) \,,
		\end{split}
	\end{equation*}
	which together with \eqref{prop_modulation} implies that
	\begin{equation*}
		\begin{split}
			\Imjk(\xv)  
			& = 
			\FI\kl{\chi_{0,0}(\xiv) e^{2\pi i \kl{ \cjk + \svjk \cdot \xiv + \etajk(\xiv) }}\F(I)(\Tjk(\xiv))}(\xv) \,,
			\\
			& = e^{2\pi i \cjk  } \FI\kl{\chi_{0,0}(\xiv) e^{2\pi i \etajk(\xiv)}  \F(I)(\Tjk(\xiv))}(\xv + \svjk) \,.
		\end{split}
	\end{equation*}
	Together with \eqref{ass_F_selfsimilar} it follows that 
	\begin{equation*}
		\Imjk(\xv) 
        =
        e^{2\pi i \cjk  } \FI\kl{\chi_{0,0}(\xiv) e^{2\pi i \etajk(\xiv)} \kl{ \F\kl{I}(\xiv) + \dejk(\xiv)}}(\xv + \svjk) \,,
	\end{equation*}
	which can be rewritten into
	\begin{equation*}
		e^{- 2\pi i \cjk  } \Imjk(\xv - \svjk) =
        \FI\kl{\chi_{0,0}(\xiv) e^{2\pi i \etajk(\xiv)} \kl{ \F\kl{I}(\xiv) + \dejk(\xiv)}}(\xv)\,,
	\end{equation*}
    and which after taking the absolute value on both sides yields
    \begin{equation*}
		\abs{ \Imjk(\xv - \svjk)} 
        =
        \abs{\FI\kl{\chi_{0,0}(\xiv) e^{2\pi i \etajk(\xiv)} \kl{ \F\kl{I}(\xiv) + \dejk(\xiv)}}(\xv)}\,.
	\end{equation*}
 	Since this holds in particular for $j = k = 0$, i.e.,
    \begin{equation*}
		\abs{ \Imzz(\xv - \svzz)} 
        =
        \abs{\FI\kl{\chi_{0,0}(\xiv) e^{2\pi i \etazz(\xiv)} \kl{ \F\kl{I}(\xiv) + \dezz(\xiv)}}(\xv)}\,,
	\end{equation*}
    it follows that \eqref{subimages_shifted} holds with $\Ejk$ as in \eqref{eq_Dejk}. Concerning the estimate \eqref{est_Dejk}, note first that due to the reverse triangle inequality $\abs{\abs{a}-\abs{b}} \leq \abs{a-b}$ it follows from \eqref{eq_Dejk} that
        \begin{equation*}
        \begin{split}
            \abs{\Ejk(\xv)}
            &\leq
            \big\vert \FI\kl{\chi_{0,0}(\xiv) e^{2\pi i \etajk(\xiv)} \kl{ \F\kl{I}(\xiv) + \dejk(\xiv)}}(\xv)
            \\
            & \qquad \qquad \qquad -
            \FI\kl{\chi_{0,0}(\xiv) e^{2\pi i \etazz(\xiv)} \kl{ \F\kl{I}(\xiv) + \dezz(\xiv)}}(\xv)
            \big\vert   \,.
        \end{split}
        \end{equation*}
    which after rearranging the terms yields
        \begin{equation*}
        \begin{split}
            \abs{\Ejk(\xv)}
            &\leq
            \big\vert \FI\kl{\chi_{0,0}(\xiv) \F\kl{I} \kl{ e^{2\pi i \etajk(\xiv)} - e^{2\pi i \etazz(\xiv)} }}
            \\
            \qquad &+            
            \FI\kl{\chi_{0,0} \kl{ \dejk(\xiv)e^{2\pi i \etajk(\xiv)} - \dezz(\xiv) e^{2\pi i \etazz(\xiv)}  }}
            \big\vert   \,.
        \end{split}
        \end{equation*}
    Now using the definition of the Fourier transform and the triangle inequality we obtain
        \begin{equation*}
        \begin{split}
            \abs{\Ejk(\xv)}
            &\leq
            \int_{\Ozz} \abs{ \F\kl{I}(\xiv) \kl{ e^{2\pi i \etajk(\xiv)} - e^{2\pi i \etazz(\xiv)} }} \, d \xiv 
            \\
            \qquad &+            
            \int_{\Ozz} \abs{ \dejk(\xiv)e^{2\pi i \etajk(\xiv)} - \dezz(\xiv) e^{2\pi i \etazz(\xiv)}  } \, d \xiv
            \,,
        \end{split}
        \end{equation*}
    which can further be estimated by
        \begin{equation*}
        \begin{split}
            \abs{\Ejk(\xv)}
            \leq
            \norm{\F(I) }_{L^1(\Ozz)} \max_{\xiv \in \Ozz}  \abs{ e^{2\pi i \etajk(\xiv)} - e^{2\pi i \etazz(\xiv)} } 
            +            
            \norm{\dejk }_{L^1(\Ozz)} + \norm{\dezz }_{L^1(\Ozz)} 
            \,.
        \end{split}
        \end{equation*}
    Together with the fact that due to the inequality $\abs{e^{2 \pi i a} - 1} \leq 2 \pi \abs{a}$ there holds
        \begin{equation*}
            \abs{ e^{2\pi i \etajk(\xiv)} - e^{2\pi i \etazz(\xiv)} }
            =
            \abs{ e^{2\pi i (\etajk(\xiv)- \etazz(\xiv))} - 1 }
            \leq 
            2 \pi \abs{\etajk(\xiv)- \etazz(\xiv)} \,,
        \end{equation*}
    it thus follows that 
        \begin{equation*}
        \begin{split}
            \abs{\Ejk(\xv)}
            \leq
            2 \pi \norm{\F(I) }_{L^1(\Ozz)} \norm{\etajk - \etazz}_{L^\infty(\Ozz)}
            +          
            \norm{\dejk }_{L^1(\Ozz)} + \norm{\dezz }_{L^1(\Ozz)}
            \,,
        \end{split}
        \end{equation*}    
    which yields \eqref{est_Dejk} and thus completes the proof.
\end{proof}

Note that \eqref{est_Dejk} is only a rough estimate for the error terms $\Ejk$. Nevertheless, it implies that if the perturbations $\etajk$ and $\dejk$ are small, it follows that
    \begin{equation}\label{subimages_shifted_approx}
		\abs{ \Imjk(\xv - \svjk)} 
        \approx 
        \abs{ \Imzz(\xv - \svzz)} \,,
		\qquad
		\forall \, j,k\in\Z \,,
	\end{equation}
which in \cite{Kumar_Drexler_Leitgeb_2013,Kumar_Wurster_Salas_Ginner_Drexler_Leitgeb_2017} is used to estimate the relative shifts $(\svjk-\svzz)$ by matching the subimages $\Imjk$ and $\Imzz$. From these shifts, the full wavefront aberration function $\phi$ is reconstructed via a least squares fitting onto either a piecewise linear or a Zernike polynomial basis. The subaperture-corellation method can thus be summarized into the following:

\begin{algo}[Subaperture-Correlation Method]\label{algo_old}
	\hspace{0pt}
	\begin{enumerate}
		\item For given $\Im$ and each $j,k \in \Z$ compute the subimages $\Imjk$ defined in \eqref{def_Imjk}.
		\item Estimate the relative slopes $(\svjk-\svzz)$ through image fitting using \eqref{subimages_shifted_approx}.
		\item Reconstruct the wavefront aberration $\phi$ from the relative slopes $(\svjk-\svzz)$.
		\item Compute the corrected image by applying the phase correction \eqref{image_correction}.
	\end{enumerate}
\end{algo}

Concerning the second step in the above algorithm, note that instead of via image fitting, the relative slopes $(\svjk-\svzz)$ can also be computed explicitly, as we see in

\begin{proposition}
	Let $\Imjk$ be defined as in \eqref{def_Imjk} and let Assumption~\ref{ass_corr} hold with $\etajk = 0$ and $\dejk = 0$ for all $j,k \in \Z$. Then
	\begin{equation}\label{subimages_slopes}
		\svjk - \svzz
		\, = \,
		\frac{\int_\Rt \xv \abs{\Imzz(\xv) } \, d\xv}{\int_\Rt \abs{\Imzz(\xv) } \, d\xv}
		-
		\frac{\int_\Rt \xv \abs{  \Imjk(\xv ) } \, d\xv}{\int_\Rt \abs{\Imjk(\xv) } \, d\xv} \,,
		\qquad
		\forall \, j,k\in\Z \,.
	\end{equation}
\end{proposition}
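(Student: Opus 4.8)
The plan is to deduce the statement directly from Proposition~\ref{prop_subimages_old}. Setting $\etajk = 0$ and $\dejk = 0$ for all $j,k \in \Z$, the perturbation terms $\Ejk$ in \eqref{eq_Dejk} vanish identically, so \eqref{subimages_shifted} becomes the exact identity
\begin{equation*}
	\abs{\Imjk(\xv - \svjk)} = \abs{\Imzz(\xv - \svzz)}\,,
	\qquad \forall\, \xv \in \R^2\,, \quad \forall\, j,k \in \Z\,.
\end{equation*}
Introducing the common profile $g(\xv) := \abs{\Imzz(\xv - \svzz)}$ and replacing $\xv$ by $\xv + \svjk$, this says that $\abs{\Imjk(\xv)} = g(\xv + \svjk)$ for all $\xv$, i.e.\ each $\abs{\Imjk}$ is a pure translate of $g$; the case $j = k = 0$ in particular gives $\abs{\Imzz(\xv)} = g(\xv + \svzz)$.

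Next I would substitute this into the two centre-of-mass quotients appearing in \eqref{subimages_slopes}. For the $(j,k)$-subimage, inserting $\abs{\Imjk(\xv)} = g(\xv + \svjk)$ and performing the substitution $\zv = \xv + \svjk$ (which leaves Lebesgue measure invariant) gives
\begin{equation*}
	\frac{\int_\Rt \xv \abs{\Imjk(\xv)}\, d\xv}{\int_\Rt \abs{\Imjk(\xv)}\, d\xv}
	= \frac{\int_\Rt (\zv - \svjk)\, g(\zv)\, d\zv}{\int_\Rt g(\zv)\, d\zv}
	= \frac{\int_\Rt \zv\, g(\zv)\, d\zv}{\int_\Rt g(\zv)\, d\zv} - \svjk\,,
\end{equation*}
and the same computation with $j = k = 0$ produces the identical first term minus $\svzz$. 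Subtracting the $(0,0)$-identity from the $(j,k)$-identity, the common centre-of-mass vector of $g$ cancels and one is left with $\svzz - \svjk$ on the right-hand side; rearranging yields exactly \eqref{subimages_slopes}.

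The only delicate point is integrability: formula \eqref{subimages_slopes} only makes sense if $\int_\Rt \abs{\Imjk}\, d\xv$ is finite and nonzero and if the first-moment integral $\int_\Rt \xv \abs{\Imjk(\xv)}\, d\xv$ converges, so these must be treated as standing assumptions; since $g$ is merely a translate of $\abs{\Imjk}$, the same properties hold for $g$ and the substitution above is then fully justified, the argument going through verbatim for every $j,k \in \Z$. I expect this bookkeeping to be the only real obstacle — the algebraic content is simply that translating a nonnegative profile shifts its centre of mass by the same vector, so that the relative shift $\svjk - \svzz$ equals the reversed difference of the centres of mass of the measured subimages, which is precisely the Shack–Hartmann-type estimate underlying Algorithm~\ref{algo_old}.
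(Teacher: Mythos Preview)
Your proof is correct and follows essentially the same approach as the paper: both start from the exact translate identity obtained from Proposition~\ref{prop_subimages_old} with vanishing perturbations, and both reduce the claim to the elementary fact that translating a nonnegative profile shifts its centre of mass by the translation vector (established via the same change-of-variables computation). The only cosmetic difference is that you introduce the common profile $g$ and subtract two instances of the centre-of-mass identity, whereas the paper works directly with the relation $\abs{\Imjk(\xv - (\svjk-\svzz))} = \abs{\Imzz(\xv)}$ and expands a single quotient; your remark on the required integrability assumptions is a valid caveat that the paper leaves implicit.
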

\begin{proof}
	Let $j,k \in \Z$ be arbitrary but fixed. Due to \eqref{subimages_shifted} and $\etajk = \dejk = 0$ there holds
	\begin{equation*}
		\abs{\Imjk(\xv - (\svjk - \svzz)) }
		=
		\abs{  \Imzz(\xv ) } \,.
	\end{equation*}
	Computing the center of mass on both sides we obtain
	\begin{equation}\label{eq_helper_03}
		\frac{\int_\Rt \xv \abs{\Imjk(\xv - (\svjk - \svzz)) } \, d\xv}{\int_\Rt \abs{\Imjk(\xv - (\svjk - \svzz)) } \, d\xv}
		= 
		\frac{\int_\Rt \xv \abs{  \Imzz(\xv ) } \, d\xv}{\int_\Rt \abs{  \Imzz(\xv ) } \, d\xv}\,.
	\end{equation}
	Next, note that by a simple change of variables in the integral we obtain
	\begin{equation*}
		\int_\Rt \xv \abs{\Imjk(\xv - (\svjk-\svzz)) } \, d\xv
		=
		\int_\Rt \xv \abs{\Imjk(\xv) } \, d\xv
		+
		(\svjk-\svzz) \int_\Rt \abs{\Imjk(\xv) } \, d\xv \,.
	\end{equation*}
	Together with the fact that 
	\begin{equation*}
		\int_\Rt \abs{\Imjk(\xv - (\svjk-\svzz)) } \, d\xv = 
		\int_\Rt \abs{\Imjk(\xv) \, } \, d\xv \,,
	\end{equation*}
	we obtain that \eqref{eq_helper_03} is equivalent to
	\begin{equation*}
		\frac{\int_\Rt \xv \abs{\Imjk(\xv) } \, d\xv
			+ (\svjk-\svzz) \int_\Rt \abs{\Imjk(\xv) } \, d\xv}{\int_\Rt \abs{\Imjk(\xv) \, } \, d\xv }
		= 
		\frac{\int_\Rt \xv \abs{  \Imzz(\xv ) } \, d\xv}{\int_\Rt \abs{  \Imzz(\xv ) } \, d\xv}\,.
	\end{equation*}
	which after a reordering of the terms now yields the assertion.
\end{proof}

Note that for the third step in Algorithm~\ref{algo_old}, any wavefront reconstruction method for the SH-WFS such as the CuReD method introduced in Section~\ref{subsect_SH_WFS} can be used. Also, note that condition \eqref{ass_F_selfsimilar} is satisfied if $I$ is the OCT image of a distant point source, since then $\F(I) \equiv  1$. Furthermore, it was found experimentally in  \cite{Kumar_Drexler_Leitgeb_2013,Kumar_Wurster_Salas_Ginner_Drexler_Leitgeb_2017} that condition \eqref{ass_F_selfsimilar} is essential for the subaperture-correlation method to work satisfactorily. However, while in many cases the Fourier information is self-similar and thus \eqref{ass_F_selfsimilar} is satisfied, this ultimately depends on the structure of the considered objects, thus limiting its applicability. Hence, it is desirable to find a DAC approach which does not require a condition like \eqref{ass_F_selfsimilar}. However, we already indicated that a DAC approach requires some type of assumption. Thus, in the next section we will only use a minimal additional assumption, which has a statistical justification and leads to a novel subaperture-based DAC approach.

\section{A Novel Correction Approach}\label{sect_novel_DAC}

In this section, we present a novel aberration correction approach for OCT based on virtual subapertures, a statistical distribution assumption, and a suitable averaging.

\subsection{Motivation and Key Ideas}\label{subsect_novel_DAC_motivation}

The following approach is motivated by the observation that in hardware-based AO-OCT as used, e.g., in ophthalmology, the wavefront sensing is typically complemented with an averaging process to remove the speckle noise due to the object structure \cite{Hofer_Artal_Singer_Aragon_Williams_2001}. The averaging blurs out the object structure phase of the OCT image and leaves only the wavefront error to be detected, resulting in more well-defined focal spots on the SH-WFS and more accurate wavefront aberration measurements. Mathematically, this indicates that the spectral phase and the wavefront phase are generally of a different frequency, and can thus be distinguished from each other.

\begin{figure}[ht!]
	\centering
	\includegraphics[width=\textwidth]{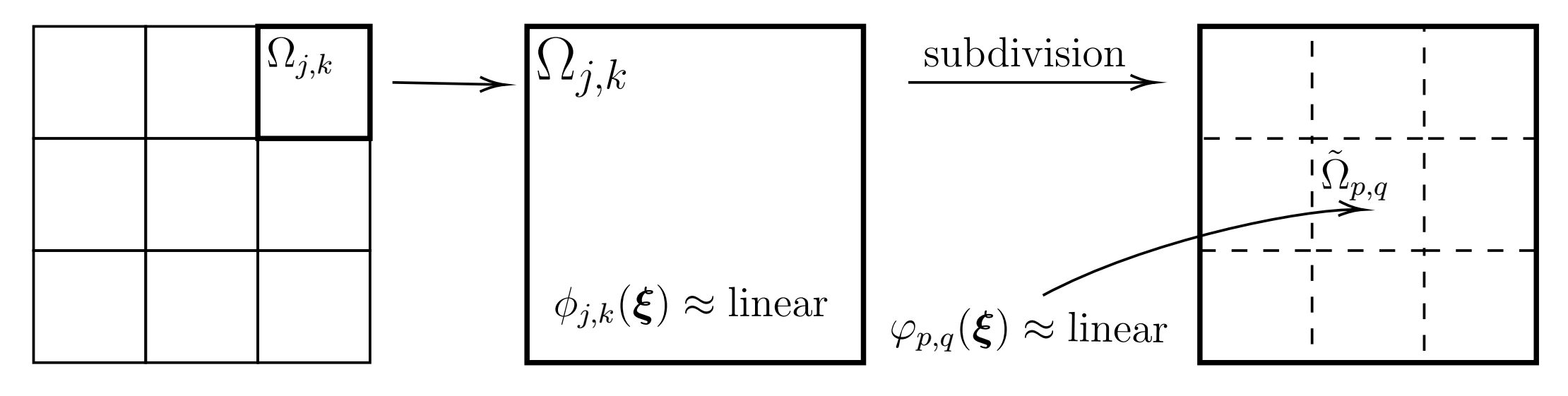}
	\caption{Illustration of the mathematical setting used in our novel DAC method: Subdivision of $\R^2$ into subdomains $\Ojk$ (left), single subdomain $\Ojk$ and assumption that $\phijk$ is approximately linear on $\Ojk$ (middle), subdivision of $\Ojk$ into smaller subdomains $\Otpq$ and assumption that $\vphipq$ is approximately linear on $\Otpq$ (right).} 
	\label{fig_subdivision_subdomains}
\end{figure}

In order to use these observations for deriving a novel DAC approach, we now consider the Fourier transform $\F(I)$ and note that it can be written in the polar form
\begin{equation}\label{eq_F_I_polar}
	\F(I)(\xiv) = \abs{\F(I)(\xiv)}e^{i\vphi(\xiv)} \,,
\end{equation}
where $\abs{\F(I)(\xiv)}$ is called the \emph{spectral amplitude} of $I$, and the real valued function $\vphi(\xiv)$ denotes the angular component of $\F(I)$, and is referred to as the \emph{spectral phase}.

In order to translate the physical observations made above into a mathematically useful form, we return again to the subapertures $\Ojk$. As before, we will assume that the wavefront aberration $\phi$ can be reasonably well approximated by linear functions $\phijk$ on the subdomains $\Ojk$. Next, we translate the physical observation that the spectral phase $\vphi$ introduced by the object contains higher frequencies than the phase $\phi$ corresponding to the wavefront aberrations. For this, we subdivide each subaperture $\Ojk$ into a number of smaller subapertures $\Otpq$. While the precise definition is given below, for now we refer to the illustration given in Figure~\ref{fig_subdivision_subdomains}. On each of the smaller subdomains $\Otpq$ we now approximate the spectral phase $\vphi$ by a linear function $\vphipq$. Then, the physical observation that the highly oscillatory spectral phase $\vphi$ causes generally random phase shifts translates to the mathematical statement that the slopes of $\vphipq$ are randomly distributed with a zero mean. We now proceed to make these statements mathematically precise.

\subsection{Subdomains/Subapertures and Subimages Revisited}\label{subsect_sub_02}

In order to mathematically specify the physical assumptions made above, we need to adapt the definitions of subdomains/subapertures and the resulting subimages made in Section~\ref{subsect_sub_01}. We start with the definition of the subdivisions of the subdomains $\Ojk$.

\begin{definition}
	For $P, Q \in \R^+$ we define the subdomain/subaperture
	\begin{equation*}
		\Otzz := [ -P/2, P/2)  \times [-Q/2, Q/2)  \subset \R^2 \,,
	\end{equation*}
	as well as the shifted subdomains/subapertures
	\begin{equation*}
		\Otpq := \Ttpq( \Otzz ) 
		= 
		\Kl{\xiv \in \R^2 \, \vert \, \Ttpq^{-1}(\xiv) \in \Otzz} \,,
		\qquad
		\forall \, p,q\in \Z \,,
	\end{equation*}
	where the translation operator $\Ttpq$ is defined by
	\begin{equation}\label{def_Ttpq}
		\Ttpq(\xiv) := (\xi_1 + pP, \xi_2 + qQ) \,,
		\qquad
		\forall \, \xiv \in \R^2 \,,
		\quad
		\forall \, p,q \in \Z \,. 
	\end{equation}
\end{definition}

Throughout this manuscript, we assume that $J, K$ are certain multiples of $P, Q$, i.e.,
\begin{equation}\label{ass_mult}
	\exists \, l_1,l_2 \in \N \, : \quad J = (2 l_1 + 1) P \,,
	\quad \text{and} \quad K = (2 l_2 + 1) Q \,. 
\end{equation}
This implies that every subdomain $\Ojk$ can be uniquely decomposed into a finite union of the subdomains $\Otpq$; see Figure~\ref{fig_subdivision_subdomains}. More precisely, if we define the index sets
\begin{equation*}
	\Djk := \Kl{ (p,q) \in \Z^2 \, \vert \, \Otpq \subseteq \Ojk   } \,,
\end{equation*}
then \eqref{ass_mult} implies that
\begin{equation*}
	\Ojk = \bigcup_{p,q \in \Djk} \Otpq \,.
\end{equation*}
Now as before, for all $p,q\in\Z$ and $\xiv \in \R^2$ we consider the indicator functions
\begin{equation*}
	\chitpq(\xiv) 
	:=
	\begin{cases}
		1 \,, & \xiv \in \Otpq \,,
		\\
		0 \,, & \text{else} \,.
	\end{cases}
\end{equation*}
of the subdomains $\Otpq$, and define the following ``cut-outs'' of the Fourier coefficients:
\begin{equation}\label{def_Dmtpq}
	\Dtmpq(\xiv) :=  \chitpq(\xiv) D^m(\xiv)  = \chitpq(\xiv) e^{i \phi(\xiv)} \F(I)(\xiv) \,,
	\qquad
	\forall \, \xiv \in \R^2 \,.
\end{equation}
The subimages corresponding to those Fourier ``cut-outs'' are given in the following

\begin{definition}
	For each $p,q \in \Z$, the subimages $\Itmpq$ are defined via
	\begin{equation}\label{def_Imtpq}
		\Itmpq(\xv) := \FI( \Dtmpq( \Ttpq(\xiv)))(\xv)\,,
		\qquad
		\forall \, \xv \in \R^2 \,,
	\end{equation}
	where the cut-out Fourier data $\Dtmpq$ are given by \eqref{def_Dmtpq}. \end{definition}

Similarly to Proposition~\ref{prop_subimages}, we can derive an alternative representation in

\begin{proposition}
	Let the subimages $\Itmpq$ be as defined in \eqref{def_Imtpq} and let 
	\begin{equation}\label{def_phitpq}
		\phitpq(\xiv) := \phi(\Ttpq(\xiv)) \,,
		\qquad
		\forall \, \xiv \in \Otzz \,,
		\quad
		\forall \, p,q \in \Z \,,
	\end{equation}
	denote the local wavefront aberrations. Then we obtain the representation
	\begin{equation}\label{Imtpq_exp}
		\Itmpq(\xv) = \FI\kl{\chitzz(\xiv) e^{i \phitpq(\xiv)} \F(I)(\Ttpq(\xiv))}(\xv) \,.
	\end{equation}
\end{proposition}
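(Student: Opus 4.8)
The plan is to mimic verbatim the proof of Proposition~\ref{prop_subimages}, simply replacing the subdomains $\Ojk$, shift operators $\Tjk$, indicator functions $\chi_{j,k}$, and cut-outs $D^m_{j,k}$ by their finer counterparts $\Otpq$, $\Ttpq$, $\chitpq$, and $\Dtmpq$. The whole argument is a substitution followed by one elementary identity about indicator functions under translation.

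Concretely, I would first insert the definition \eqref{def_Dmtpq} of $\Dtmpq$ into the definition \eqref{def_Imtpq} of $\Itmpq$, which gives
\begin{equation*}
	\Itmpq(\xv) = \FI\kl{\chitpq(\Ttpq(\xiv))\, e^{i \phi(\Ttpq(\xiv))}\, \F(I)(\Ttpq(\xiv))}(\xv) \,.
\end{equation*}
Next I would record the identity $\chitpq(\Ttpq(\xiv)) = \chitzz(\xiv)$, which holds because $\Otpq = \Ttpq(\Otzz)$ by definition, so that $\Ttpq(\xiv) \in \Otpq$ if and only if $\xiv \in \Otzz$. Substituting this into the previous display yields
\begin{equation*}
	\Itmpq(\xv) = \FI\kl{\chitzz(\xiv)\, e^{i \phi(\Ttpq(\xiv))}\, \F(I)(\Ttpq(\xiv))}(\xv) \,,
\end{equation*}
and invoking the definition \eqref{def_phitpq} of $\phitpq$ in the exponent gives exactly \eqref{Imtpq_exp}.

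There is no real obstacle here: the statement is the direct analogue of Proposition~\ref{prop_subimages} at the level of the refined subdivision, and the only point requiring a word of justification is the translation-invariance of the indicator function, $\chitpq \circ \Ttpq = \chitzz$, which is immediate from the definition of $\Otpq$ as $\Ttpq(\Otzz)$. Unlike in the proof of Proposition~\ref{prop_subimages_old}, no use of the modulation property \eqref{prop_modulation} or of any linearity assumption on the phase is needed, so the proof is essentially a two-line computation.
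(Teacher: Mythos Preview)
Your proposal is correct and follows exactly the approach the paper takes: the paper's own proof simply reads ``This follows in direct analogy to the proof of Proposition~\ref{prop_subimages},'' and what you have written is precisely that analogy spelled out. The key step---the indicator identity $\chitpq(\Ttpq(\xiv)) = \chitzz(\xiv)$---is the same one used in the proof of Proposition~\ref{prop_subimages}, so there is nothing to add.
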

\begin{proof}
	This follows in direct analogy to the proof of Proposition~\ref{prop_subimages}.
\end{proof}

\subsection{Mathematical Derivation of the Method}

The physical motivation and key ideas introduced in Section~\ref{subsect_novel_DAC_motivation} now translate to 

\begin{assumption}\label{ass_new}
	On each subaperture $\Ojk$, the wavefront aberration $\phi$ can be reasonably well approximated by a linear function, i.e., \eqref{ass_phijk_linear} holds for all $j,k \in \Z$, i.e.,
	\begin{equation*}
		\phijk(\xiv) = 2\pi \kl{ \cjk +  \svjk \cdot \xiv + \etajk(\xiv)} \,,
		\qquad
		\forall \, \xiv \in \Omega_{0,0} \,.
	\end{equation*}
	Furthermore, on each smaller subaperture $\Otpq$, the spectral phase $\vphi$ can also be reasonably well approximated by a linear function, i.e., for all $p,q \in \Z$ there holds
	\begin{equation}\label{ass_vphipq_linear}
		\vphipq(\xiv) := \vphi(\Ttpq(\xiv)) =
        2\pi \kl{ \dpq +  \tvpq \cdot \xiv + \zetapq(\xiv)}  \,,
		\qquad
		\forall \, \xiv \in \Otzz \,.
	\end{equation} 
	Moreover, the magnitude of the Fourier information is \emph{self-similar}, i.e., for all $p,q \in \Z$,
	\begin{equation}\label{ass_F_abs_selfsimilar}
		\abs{\F(I)(\Ttpq(\xiv))} =
        \abs{\F(I)(\xiv)} + \gapq(\xiv) \,,
		\qquad
		\forall \, \xiv \in \Otzz \,.
	\end{equation}  
	Finally, let $J,K,P,Q \in \R^+$ satisfy the multiplicity condition \eqref{ass_mult}, and let
	\begin{equation}\label{eq_spectral_phase_mean}
		\sum_{p,q \in \Djk} \kl{\tvpq - \tvzz} =\epsjk \,,
		\qquad
		\forall \, j,k \in \Z \,.
	\end{equation}
    Here, the functions $\etajk$, $\zetapq$, and $\gapq$, and the constants $\epsjk$ are small perturbations.
\end{assumption}

Note first that \eqref{eq_spectral_phase_mean} encodes the physical observation that averaging blurs out the spectral phase of the OCT image: as we will see below, the relative slopes $(\tvpq - \tvzz)$ cause local shifts in the subimages $\Itmpq$ on a smaller scale than those induced by the relative slopes $(\svjk - \svzz)$. The fact that in hardware-based AO with a SH-WFS these local shifts can be removed by averaging thus motivates \eqref{eq_spectral_phase_mean}. Recall that without including such additional physical information, the general DAC Problem~\ref{problem_DAC} is not uniquely solvable, since essentially the wavefront aberration $\phi$ has to be determined from the spectral phase $\phi + \vphi$ of the measured image $\Im$, with $\vphi$ also being unknown in general. Assumption~4.4 thus imposes additional, physically motivated restrictions, which can informally be understood to lead to a solvable problems as follows: On each large subaperture $\Ojk$ we are interested in one variable, namely the average slope $\svjk$. Each large subaperture $\Ojk$ is in turn divided into $((2 l_1 + 1) \cdot (2 l_2 + 1))$ smaller subapertures $\Otpq$, on which also the average slopes $\tvpq$ are unknown. Hence, for each large subaperture we have $((2 l_1 + 1) \cdot (2 l_2 + 1) + 1)$ unknowns, while we only have access to $((2 l_1 + 1) \cdot (2 l_2 + 1))$ ``equations'', namely the spectral phase information $\arg(\F(\Im))$. Hence, an additional restriction is required, which is exactly the physically motivated averaging assumption \eqref{eq_spectral_phase_mean}, which is thus the key to our approach.

Finally, note that in contrast to \eqref{ass_F_selfsimilar}, we now only require that the spectral amplitude $\abs{\F(I)}$ is self-similar. Since due to \eqref{model} the wavefront aberration $\phi$ acts only on the spectral phase $\vphi$ but not on the spectral amplitude $\abs{\F(I)}$, this assumption can always be circumvented in practice, as outlined in Remark~\ref{remark_selfsim} below.

For the upcoming derivation, we first need the following technical result:
\begin{proposition}\label{prop_technical}
	Let Assumption~\ref{ass_new} hold, let $j,k\in\Z$ and let $p,q \in \Djk$. Then with $\ctjkpq := \cjk + \svjk \cdot (pP-jJ , q Q - kK)$ and $\etatjkpq(\xiv) := \etajk(\xi_1+pP-jJ , \xi_2 + q Q - kK)$ the function $\phitpq$ defined in \eqref{def_phitpq} satisfies
	\begin{equation*}
		\phitpq(\xiv) 
		=
        2\pi \kl{ \ctjkpq +	\svjk \cdot \xiv
        + \etatjkpq(\xiv)} \,,
		\qquad 
		\forall \, \xiv \in \Otzz \,,
	\end{equation*}
\end{proposition}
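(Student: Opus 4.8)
The plan is to unravel the definition $\phitpq(\xiv) = \phi(\Ttpq(\xiv))$ from \eqref{def_phitpq}, transport the argument $\Ttpq(\xiv)$ back into the central large subaperture $\Ozz$ by composing translations, and then apply the linear approximation \eqref{ass_phijk_linear} of $\phijk$.

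First I would record the elementary fact that the two translation operators compose to another translation: from \eqref{def_Tjk} and \eqref{def_Ttpq} one reads off, for every $\xiv \in \R$,
\begin{equation*}
	\Tjk^{-1}\kl{\Ttpq(\xiv)} = (\xi_1 + pP - jJ,\ \xi_2 + qQ - kK) =: \xiv' \,.
\end{equation*}
The key geometric point is that $\xiv' \in \Ozz$ whenever $\xiv \in \Otzz$ and $(p,q)\in\Djk$: by definition of $\Djk$ we have $\Otpq \subseteq \Ojk$, hence $\Ttpq(\xiv) \in \Otpq \subseteq \Ojk$, and applying $\Tjk^{-1}$ (which maps $\Ojk$ bijectively onto $\Ozz$) yields $\xiv' \in \Ozz$. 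This is exactly what is needed to invoke \eqref{ass_phijk_linear}, which is only asserted on $\Ozz$, at the point $\xiv'$.

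With these preliminaries the computation is short. Using \eqref{def_phitpq}, the identity $\Ttpq(\xiv) = \Tjk(\xiv')$, the definition \eqref{def_phijk} of $\phijk$, and finally \eqref{ass_phijk_linear},
\begin{equation*}
	\phitpq(\xiv) = \phi\kl{\Ttpq(\xiv)} = \phi\kl{\Tjk(\xiv')} = \phijk(\xiv') = 2\pi\kl{\cjk + \svjk\cdot\xiv' + \etajk(\xiv')} \,.
\end{equation*}
It then remains to split $\svjk\cdot\xiv' = \svjk\cdot\xiv + \svjk\cdot(pP-jJ,\ qQ-kK)$ and to relabel: the scalar $\cjk + \svjk\cdot(pP-jJ,\ qQ-kK)$ is precisely $\ctjkpq$, while $\etajk(\xiv') = \etajk(\xi_1+pP-jJ,\ \xi_2+qQ-kK)$ is by definition $\etatjkpq(\xiv)$. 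Substituting gives the claimed representation on $\Otzz$.

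I do not expect a genuine obstacle here: the statement is in essence a change-of-variables/bookkeeping identity. The only point requiring care is the domain inclusion $\Ttpq(\Otzz)\subseteq\Ojk$ and the accompanying fact that $\xiv'$ lands in $\Ozz$ rather than merely in $\R$, since \eqref{ass_phijk_linear} holds only on $\Ozz$; both follow from the multiplicity condition \eqref{ass_mult} and the definition of the index set $\Djk$. Everything else reduces to the linearity of the inner product and the renaming of the constant and perturbation terms.
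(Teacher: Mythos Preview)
Your proof is correct and follows essentially the same route as the paper: both arguments transport $\Ttpq(\xiv)$ back to $\Ozz$ via $\Tjk^{-1}$, verify that the resulting point lies in $\Ozz$ using $(p,q)\in\Djk$, apply \eqref{ass_phijk_linear} there, and then split the inner product to identify $\ctjkpq$ and $\etatjkpq$. Your introduction of the shorthand $\xiv'$ makes the bookkeeping slightly tidier, but the substance is identical.
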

\begin{proof}
	Let $j,k \in \Z$, $p,q \in \Djk$, and $\xiv \in \Otzz$, be arbitrary but fixed and consider
	\begin{equation*}
		\phitpq(\xiv) \overset{\eqref{def_phitpq}}{=} \phi(\Ttpq(\xiv)) 
		\overset{\eqref{def_Ttpq}}{=} \phi(\xi_1+pP , \xi_2 + q Q) \,.
	\end{equation*}
	Since due to \eqref{def_Tjk} and \eqref{def_phijk} there holds
	\begin{equation*}
		\begin{split}
			\phi(\xi_1+pP , \xi_2 + q Q)
			&= \phi((\xi_1+pP-jJ) + jJ , (\xi_2 + q Q - kK) + kK)
			\\
			&\overset{\eqref{def_Tjk}}{=} \phi(\Tjk(\xi_1+pP-jJ , \xi_2 + q Q - kK))
			\\
			&\overset{\eqref{def_phijk}}{=} \phijk(\xi_1+pP-jJ , \xi_2 + q Q - kK) \,,
		\end{split}    
	\end{equation*}
	it follows that
	\begin{equation*}
		\phitpq(\xiv) = \phijk(\xi_1+pP-jJ , \xi_2 + q Q - kK) \,.
	\end{equation*}
	Now since $\xiv \in \Otzz$ and $p,q \in \Djk$ it follows that $(\xi_1+pP-jJ , \xi_2 + q Q - kK) \in \Omega_{0,0}$. Hence, it follows with \eqref{ass_phijk_linear} that
	\begin{equation*}
		\begin{split}
			\phitpq(\xiv) 
			&= \phijk(\xi_1+pP-jJ , \xi_2 + q Q - kK)
			\\
			& = 2\pi \kl{ \cjk +  \svjk \cdot (\xi_1+pP-jJ , \xi_2 + q Q - kK) + \etajk(\xi_1+pP-jJ , \xi_2 + q Q - kK) } 
			\\
			& = 2\pi \kl{ \cjk + \svjk \cdot (pP-jJ , q Q - kK)
			+
			\svjk \cdot \xiv
            + \etajk(\xi_1+pP-jJ , \xi_2 + q Q - kK)} \,,
		\end{split}    
	\end{equation*}.
	which together with the definitions of $\ctjkpq$ and $\etatjkpq$ now yields the assertion.
\end{proof}

Using the above proposition, we can now proceed to the following

\begin{theorem}
	Let $\Itmpq$ be defined as in \eqref{def_Imtpq} and let \eqref{model} and Assumption~\ref{ass_new} hold. Then for each $j,k \in \Z$, any $p,q \in \Djk$, and for all $\xv \in \R^2$ there holds
	\begin{equation}\label{eq_shift}
		\abs{\Itmpq(\xv - \kl{\svjk + \tvpq})} = \abs{\Itmzz(\xv - \kl{\svzz + \tvzz})} + \Etpq(\xv)  \,,
	\end{equation}
where the perturbation functions $\Etpq$ are given by
    \begin{equation}\label{eq_Detpq}
    \begin{split}
        \Etpq(\xv) &:= 
        \abs{ \FI\kl{\chitzz(\xiv) e^{2\pi i \kl{\etatjkpq(\xiv) + \zetapq(\xiv)}} \kl{ \abs{\F\kl{I}(\xiv) }+ \gapq(\xiv)}}(\xv)}
        \\
        & \qquad -
        \abs{ \FI\kl{\chitzz(\xiv) e^{2\pi i \kl{\etatzzzz(\xiv) + \zetazz(\xiv)}} \kl{ \abs{\F\kl{I}(\xiv)} + \gazz(\xiv)}}(\xv)} \,.
    \end{split}
    \end{equation}
Furthermore, the following error estimate holds:
    \begin{equation}\label{est_Detpq}
    \begin{split}
        & \norm{ \abs{\Itmpq(\cdot - \kl{\svjk + \tvpq})}   -  \abs{\Itmzz(\cdot - \kl{\svzz + \tvzz})} }_{\LiRt}
        =
        \norm{ \Etpq }_{\LiRt}
        \\
        &\qquad \leq
        2 \pi \norm{\F(I) }_{L^1(\Otzz)} \kl{ \norm{\etajk}_{L^\infty(\Ozz)}
        + \norm{\etazz}_{L^\infty(\Ozz)} + \norm{\zetapq-\zetazz}_{L^\infty(\Otzz)}  }
        \\
        & \qquad \qquad +          
        \norm{\gapq }_{L^1(\Otzz)} + \norm{\gazz }_{L^1(\Otzz)}\,.
    \end{split}
    \end{equation}
\end{theorem}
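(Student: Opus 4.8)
The plan is to follow the same strategy as in the proof of Proposition~\ref{prop_subimages_old}, but now additionally carrying along the polar decomposition \eqref{eq_F_I_polar} of $\F(I)$ and the finer subdivision. I would start from the representation \eqref{Imtpq_exp},
\[
	\Itmpq(\xv) = \FI\kl{\chitzz(\xiv) e^{i \phitpq(\xiv)} \F(I)(\Ttpq(\xiv))}(\xv) \,,
\]
and insert the polar form $\F(I)(\Ttpq(\xiv)) = \abs{\F(I)(\Ttpq(\xiv))}\, e^{i \vphi(\Ttpq(\xiv))} = \abs{\F(I)(\Ttpq(\xiv))}\, e^{i \vphipq(\xiv)}$. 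Then I would use Proposition~\ref{prop_technical} to write $\phitpq(\xiv) = 2\pi\kl{\ctjkpq + \svjk\cdot\xiv + \etatjkpq(\xiv)}$, the assumption \eqref{ass_vphipq_linear} to write $\vphipq(\xiv) = 2\pi\kl{\dpq + \tvpq\cdot\xiv + \zetapq(\xiv)}$, and the self-similarity assumption \eqref{ass_F_abs_selfsimilar} to replace $\abs{\F(I)(\Ttpq(\xiv))}$ by $\abs{\F(I)(\xiv)} + \gapq(\xiv)$, all of which are valid for $\xiv \in \Otzz$.

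After collecting terms, the exponent splits into a constant phase $2\pi\kl{\ctjkpq + \dpq}$, a linear part $2\pi\kl{\svjk + \tvpq}\cdot\xiv$, and the residual $2\pi\kl{\etatjkpq(\xiv) + \zetapq(\xiv)}$. Pulling the constant phase out of the inverse Fourier transform and applying the modulation property \eqref{prop_modulation} to the linear part, which turns $e^{2\pi i\kl{\svjk + \tvpq}\cdot\xiv}$ into a shift by $\svjk + \tvpq$, I obtain
\[
	\Itmpq(\xv) = e^{2\pi i\kl{\ctjkpq + \dpq}}\, \FI\kl{\chitzz(\xiv)\, e^{2\pi i\kl{\etatjkpq(\xiv) + \zetapq(\xiv)}}\kl{\abs{\F(I)(\xiv)} + \gapq(\xiv)}}(\xv + \svjk + \tvpq) \,.
\]
Replacing $\xv$ by $\xv - \kl{\svjk + \tvpq}$ and taking absolute values removes the unimodular constant, which is precisely the first summand on the right-hand side of \eqref{eq_shift}. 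Specializing this identity to $j = k = p = q = 0$ (note that $\Otzz \subseteq \Ozz$ by \eqref{ass_mult}, so $(0,0) \in \Delta_{0,0}$, $\ctjkpq = \cjk|_{0,0} = c_{0,0}$, and $\etatjkpq = \etatzzzz = \etazz$) yields the analogous expression for $\abs{\Itmzz(\xv - \kl{\svzz + \tvzz})}$; subtracting the two identities gives \eqref{eq_shift} with $\Etpq$ exactly as in \eqref{eq_Detpq}.

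For the error estimate, I would apply the reverse triangle inequality $\abs{\abs{a} - \abs{b}} \le \abs{a - b}$ to \eqref{eq_Detpq}, then split the resulting difference of inverse Fourier transforms into the two contributions
\[
	\FI\kl{\chitzz\, \abs{\F(I)}\kl{e^{2\pi i\kl{\etatjkpq + \zetapq}} - e^{2\pi i\kl{\etatzzzz + \zetazz}}}}(\xv)
	\quad\text{and}\quad
	\FI\kl{\chitzz\kl{\gapq\, e^{2\pi i\kl{\etatjkpq + \zetapq}} - \gazz\, e^{2\pi i\kl{\etatzzzz + \zetazz}}}}(\xv) \,,
\]
and bound each by the $L^1(\Otzz)$-norm of its integrand, using $\abs{e^{2\pi i(\cdot)}} = 1$ together with $\abs{\FI(h)(\xv)} \le \norm{h}_{L^1}$. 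The second term is immediately controlled by $\norm{\gapq}_{L^1(\Otzz)} + \norm{\gazz}_{L^1(\Otzz)}$. For the first term I would use $\abs{e^{2\pi i a} - e^{2\pi i b}} = \abs{e^{2\pi i(a-b)} - 1} \le 2\pi\abs{a-b}$, then $\abs{\etatjkpq - \etatzzzz + \zetapq - \zetazz} \le \abs{\etatjkpq} + \abs{\etatzzzz} + \abs{\zetapq - \zetazz}$, and finally the fact that the arguments of $\etatjkpq$ and $\etatzzzz$ lie in $\Ozz$ (by the definition of $\etatjkpq$ and since $p,q \in \Djk$), so that $\abs{\etatjkpq(\xiv)} \le \norm{\etajk}_{L^\infty(\Ozz)}$ and $\abs{\etatzzzz(\xiv)} \le \norm{\etazz}_{L^\infty(\Ozz)}$. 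Pulling $\norm{\F(I)}_{L^1(\Otzz)} = \norm{\abs{\F(I)}}_{L^1(\Otzz)}$ out of the integral and taking the supremum over $\xv$ then yields exactly \eqref{est_Detpq}.

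The computation is essentially routine once Proposition~\ref{prop_technical} is in hand; the main bookkeeping hazards — rather than genuine obstacles — are: (i) verifying that $\ctjkpq$ and $\dpq$ are truly constant in $\xiv$ so that they can be absorbed, (ii) correctly identifying the $(0,0)$-data ($\etatzzzz$, $\svzz$, $\tvzz$, $\gazz$, $\zetazz$) when specializing the identity, and (iii) respecting the inclusion $\Otzz \subseteq \Ozz$ when passing from $L^\infty(\Otzz)$ to $L^\infty(\Ozz)$ for the $\eta$-terms while keeping the $L^1$-norms over $\Otzz$. The conceptual heart of the argument has already been done in Proposition~\ref{prop_technical}, which is why no deeper difficulty is expected here.
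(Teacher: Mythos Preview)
Your proposal is correct and follows essentially the same approach as the paper's proof: start from \eqref{Imtpq_exp}, combine Proposition~\ref{prop_technical} with the polar decomposition \eqref{eq_F_I_polar}, \eqref{ass_vphipq_linear}, and \eqref{ass_F_abs_selfsimilar}, extract the constant and linear phases via \eqref{prop_modulation}, specialize to $j=k=p=q=0$, and subtract; the error estimate then proceeds exactly as in Proposition~\ref{prop_subimages_old} via the reverse triangle inequality, the bound $\abs{e^{2\pi i a}-e^{2\pi i b}}\le 2\pi\abs{a-b}$, and the inclusion-based control of $\etatjkpq$ by $\norm{\etajk}_{L^\infty(\Ozz)}$. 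The only cosmetic difference is the order in which you insert the polar form versus Proposition~\ref{prop_technical}, which is immaterial.
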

\begin{proof}
	Let $j,k \in \Z$ be arbitrary but fixed and let $p,q\in\Djk$. Due to \eqref{Imtpq_exp} there holds
	\begin{equation*}
		\Itmpq(\xv) = \FI\kl{\chitzz(\xiv) e^{i \phitpq(\xiv)} \F(I)(\Ttpq(\xiv))}(\xv) \,.
	\end{equation*}
	Since due to Proposition~\ref{prop_technical} there holds
	\begin{equation*}
		\phitpq(\xiv) 
		=
        2\pi \kl{ \ctjkpq +	\svjk \cdot \xiv + \etatjkpq(\xiv)} \,,
		\qquad 
		\forall \, \xiv \in \Otzz \,,
	\end{equation*}
	it follows that
	\begin{equation*}
		\Itmpq(\xv) 
        =
        \FI\kl{\chitzz(\xiv) e^{2\pi i \kl{ \ctjkpq + \svjk \cdot \xiv + \etatjkpq(\xiv)} } \F(I)(\Ttpq(\xiv))}(\xv) \,,
	\end{equation*}
	which can be rearranged into
	\begin{equation}\label{eq_helper_01}
		\Itmpq(\xv) 
        =
        e^{2\pi i \ctjkpq } \FI\kl{\chitzz(\xiv) e^{2\pi i (\svjk \cdot \xiv + \etatjkpq(\xiv))}  \F(I)(\Ttpq(\xiv))}(\xv) \,.
	\end{equation}
	Next, note that due to \eqref{eq_F_I_polar} and \eqref{ass_vphipq_linear} for all $\xiv \in \Otzz$ there holds  
	\begin{equation*}
		\begin{split}
			\F\kl{I}(\Ttpq(\xiv))
			& \overset{\eqref{eq_F_I_polar}}{=}
			\abs{\F\kl{I}(\Ttpq(\xiv))} e^{i \vphi(\Ttpq(\xiv)))}
			\overset{\eqref{ass_vphipq_linear}}{=}
			\abs{\F\kl{I}(\Ttpq(\xiv))} e^{i 2\pi \kl{ \dpq +  \tvpq \cdot \xiv + \zetapq(\xiv)}}
		\end{split}
	\end{equation*}    
	Using this in \eqref{eq_helper_01} and after rearranging we obtain    
	\begin{equation*}
		\Itmpq(\xv) = e^{2\pi i \kl{\ctjkpq + \dpq}} \FI\kl{\chitzz(\xiv) e^{2\pi i \kl{\kl{\svjk + \tvpq} \cdot \xiv + \etatjkpq(\xiv) + \zetapq(\xiv)}}  \abs{\F\kl{I}(\Ttpq(\xiv))}}(\xv) \,,
	\end{equation*}
	which together with \eqref{prop_modulation} and after taking the absolute value yields    
	\begin{equation}\label{eq_helper_02}
		\abs{\Itmpq(\xv)} = \abs{ \FI\kl{\chitzz(\xiv) e^{2\pi i \kl{\etatjkpq(\xiv) + \zetapq(\xiv)}} \abs{\F\kl{I}(\Ttpq(\xiv))}}(\xv + \kl{\svjk + \tvpq})} \,.
	\end{equation}
	Next, note that due to the self-similarity assumption \eqref{ass_F_selfsimilar} it follows that 
	\begin{equation*}
		\abs{\Itmpq(\xv)} = \abs{ \FI\kl{\chitzz(\xiv) e^{2\pi i \kl{\etatjkpq(\xiv) + \zetapq(\xiv)}} \kl{ \abs{\F\kl{I}(\xiv)} + \gapq(\xiv)}}(\xv + \kl{\svjk + \tvpq})} \,,
	\end{equation*}
	and thus after the change of variables $x \rightarrow x - \kl{\svjk + \tvpq}$ we obtain
	\begin{equation*}
		\abs{\Itmpq(\xv - \kl{\svjk + \tvpq})} = \abs{ \FI\kl{\chitzz(\xiv) e^{2\pi i \kl{\etatjkpq(\xiv) + \zetapq(\xiv)}} \kl{ \abs{\F\kl{I}(\xiv)} + \gapq(\xiv)}}(\xv)} \,.
	\end{equation*}
    Now, since $j,k \in \Z$ and $p,q \in \Djk$ were arbitrary, the above equation in particular also holds for $j=k=p=q=0$, i.e., 
    \begin{equation*}
		\abs{\Itmzz(\xv - \kl{\svzz + \tvzz})} = \abs{ \FI\kl{\chitzz(\xiv) e^{2\pi i \kl{\etatzzzz(\xiv) + \zetazz(\xiv)}} \kl{ \abs{\F\kl{I}(\xiv)} + \gazz(\xiv)}}(\xv)} \,,
	\end{equation*}
    and thus it follows that \eqref{eq_Detpq} holds with $\Etpq$ as in \eqref{eq_Detpq}. Concerning the estimate \eqref{eq_Detpq}, note first that following the same steps as in the proof of Proposition~\ref{prop_subimages_old} we obtain
        \begin{equation*}
        \begin{split}
            \abs{\Etpq(\xv)}
            &\leq
            2 \pi \norm{\F(I) }_{L^1(\Otzz)} \norm{(\etatjkpq - \etatzzzz) + (\zetapq-\zetazz)}_{L^\infty(\Otzz)}
            \\
            & \qquad +          
            \norm{\gapq }_{L^1(\Otzz)} + \norm{\gazz }_{L^1(\Otzz)}
            \,,
        \end{split}
        \end{equation*} 
    which together with the triangle inequality and the fact that
        \begin{equation*}
            \norm{(\etatjkpq - \etatzzzz)}_{L^\infty(\Otzz)}
            \leq
            \norm{\etajk}_{L^\infty(\Ozz)}
            + \norm{\etazz}_{L^\infty(\Ozz)}
        \end{equation*}
    yields \eqref{est_Dejk} and thus completes the proof.
\end{proof}

As in the subaperture-correlation method, due to \eqref{eq_shift} and \eqref{eq_Detpq} it now follows that if the perturbations $\etajk$, $\zetapq$ and $\gapq$ are small, then the error terms $\Etpq$ are small and thus
    \begin{equation}\label{eq_shift_approx}
		\abs{\Itmpq(\xv - \kl{\svjk + \tvpq})}  
        \approx \abs{\Itmzz(\xv - \kl{\svzz + \tvzz})}  \,,
        \qquad 
        \forall \, j,k\in\Z\,, \quad p,q \in \Djk \,.
	\end{equation}
Hence, in this case the relative shift vectors $\vvjkpq := \kl{\svjk - \svzz} + \kl{\tvpq - \tvzz}$ can now be estimated by matching the subimages $\Itmpq$ and $\Itmzz$. Summing these vectors $\vvjkpq$ over all $p,q \in \Djk$ for fixed $j,k \in \Z$ yields
\begin{equation*}
	\sum\limits_{p,q\in\Djk} \vvjkpq = \sum\limits_{p,q\in\Djk} \kl{\svjk - \svzz}  + \sum\limits_{p,q\in\Djk} \kl{\tvpq - \tvzz} \,.
\end{equation*}
Noting that $\svjk$ is independent of $p,q \in \Djk$ and using \eqref{eq_spectral_phase_mean} we thus obtain 
\begin{equation}\label{slopes_shifts}
	\svjk - \svzz = \frac{1}{\abs{\Djk}} \sum\limits_{p,q\in\Djk} \vvjkpq \,\,-\,\, \frac{\epsjk}{\abs{\Djk}}  \,,
\end{equation}
where $\abs{\Djk}$ denotes the number of elements in the set $\Djk$. These considerations now give rise to our novel DAC approach, which is summarized in the following 

\begin{algo}[Novel Digital Aberration Correction Method]\label{algo_novel}
	\hspace{0pt}
	\begin{enumerate}
		\item For given $\Im$ and each $p,q \in \Z$ compute the subimages $\Itmpq$ defined in \eqref{def_Imtpq}.
		\item Estimate the shift vectors $\vvjkpq = \kl{\svjk-\svzz} + \kl{\tvpq-\tvzz}$ through image fitting using \eqref{eq_shift_approx}.
		\item Compute the relative slopes $\svjk-\svzz$ via averaging as in formula \eqref{slopes_shifts}.
		\item Reconstruct the wavefront aberration $\phi$ from the relative slopes $\kl{\svjk - \svzz}$.
		\item Compute the corrected image by applying the phase correction \eqref{image_correction}.
	\end{enumerate}
\end{algo}

Concerning the second and third step in the above algorithm, note that similarly as in Proposition~\ref{prop_subimages}, the relative slopes $\kl{\svjk - \svzz}$ can also be computed explicitly, as we now show in

\begin{proposition}
	Let $\Itmpq$ be defined as in \eqref{def_Imtpq} and let Assumption~\ref{ass_new} hold with $\etajk = 0$, $\zetapq = 0$, and $\gapq = 0$. Then
	\begin{equation}\label{subimages_slopes_new}
		\svjk - \svzz 
		=
		\frac{1}{\abs{\Djk}} \sum\limits_{p,q \in \Djk} \kl{ 
			\frac{
				\int_\Rt \xv \abs{\Itmzz(\xv)}     \, d\xv}
			{\int_\Rt \abs{ \Itmzz(\xv)} \, d \xv} 
			-
			\frac{
				\int_\Rt \xv \abs{\Itmpq(\xv) } \, d\xv}
			{\int_\Rt \abs{\Itmpq(\xv) \, } \, d\xv}}
        - \frac{\epsjk}{\abs{\Djk}}
		\,,
	\end{equation}
	where $\abs{\Djk}$ denotes the number of elements in the set $\Djk$.
\end{proposition}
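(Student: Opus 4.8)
The plan is to follow exactly the template set by the earlier proposition deriving formula \eqref{subimages_slopes} for the subaperture-correlation method, but now one level down, at the scale of the small subapertures $\Otpq$. The starting point is the approximate shift identity \eqref{eq_shift_approx}: under Assumption~\ref{ass_new} with $\etajk = 0$, $\zetapq = 0$, and $\gapq = 0$, all perturbation terms in \eqref{eq_Detpq} vanish, so $\Etpq \equiv 0$ and \eqref{eq_shift} becomes the exact identity
\begin{equation*}
	\abs{\Itmpq(\xv - \kl{\svjk + \tvpq})} = \abs{\Itmzz(\xv - \kl{\svzz + \tvzz})} \,,
	\qquad \forall \, \xv \in \R^2 \,,
\end{equation*}
valid for every $j,k \in \Z$ and $p,q \in \Djk$. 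Shifting the argument by $\kl{\svzz + \tvzz}$ we can equivalently write $\abs{\Itmpq(\xv - \vvjkpq)} = \abs{\Itmzz(\xv)}$ with $\vvjkpq = \kl{\svjk - \svzz} + \kl{\tvpq - \tvzz}$.

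Next I would take the center of mass of both sides, exactly as in the proof of the proposition leading to \eqref{subimages_slopes}. On the right-hand side we get $\int_\Rt \xv \abs{\Itmzz(\xv)}\,d\xv \big/ \int_\Rt \abs{\Itmzz(\xv)}\,d\xv$. On the left-hand side, the change of variables $\xv \mapsto \xv + \vvjkpq$ gives $\int_\Rt \xv \abs{\Itmpq(\xv - \vvjkpq)}\,d\xv = \int_\Rt \xv \abs{\Itmpq(\xv)}\,d\xv + \vvjkpq \int_\Rt \abs{\Itmpq(\xv)}\,d\xv$, while the normalizing integral is translation invariant. Dividing through and rearranging yields
\begin{equation*}
	\vvjkpq = \frac{\int_\Rt \xv \abs{\Itmzz(\xv)}\,d\xv}{\int_\Rt \abs{\Itmzz(\xv)}\,d\xv} - \frac{\int_\Rt \xv \abs{\Itmpq(\xv)}\,d\xv}{\int_\Rt \abs{\Itmpq(\xv)}\,d\xv} \,,
\end{equation*}
i.e.\ the center-of-mass formula for each relative shift vector $\vvjkpq$.

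Finally I would invoke the averaging step already carried out in the main text between \eqref{eq_shift_approx} and \eqref{slopes_shifts}: fix $j,k$, sum the above over all $p,q \in \Djk$, use that $\svjk - \svzz$ does not depend on $p,q$ so it sums to $\abs{\Djk}\kl{\svjk - \svzz}$, and apply the mean condition \eqref{eq_spectral_phase_mean}, $\sum_{p,q\in\Djk}\kl{\tvpq - \tvzz} = \epsjk$, to replace $\sum_{p,q\in\Djk}\vvjkpq$ by $\abs{\Djk}\kl{\svjk - \svzz} + \epsjk$. Solving for $\svjk - \svzz$ and dividing by $\abs{\Djk}$ gives precisely \eqref{subimages_slopes_new}. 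There is essentially no obstacle here: the only points requiring a word of care are that the integrals defining the centers of mass are assumed to be well-defined and finite (as is implicit already in the analogous earlier proposition), and that one must be careful to keep the roles of the two index pairs straight — the shift $\svjk$ is attached to the large subaperture $\Ojk$ containing $\Otpq$, hence is constant over the sum, whereas $\tvpq$ varies. So the proof is a direct, essentially mechanical combination of the center-of-mass computation of the earlier proposition with the averaging identity \eqref{slopes_shifts}, and I would present it in just that order.
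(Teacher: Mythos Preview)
Your proposal is correct and follows essentially the same approach as the paper's own proof: start from \eqref{eq_shift} with vanishing perturbations to obtain the exact shift identity, compute the center of mass on both sides using the same change-of-variables argument as for \eqref{subimages_slopes}, and then sum over $p,q \in \Djk$ and apply \eqref{eq_spectral_phase_mean}. The only cosmetic difference is that the paper writes the shift identity directly in the form $\abs{\Itmpq(\xv - \vvjkpq)} = \abs{\Itmzz(\xv)}$ rather than first stating \eqref{eq_shift} and then translating by $\svzz + \tvzz$.
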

\begin{proof}
	Let $j,k \in \Z$ be arbitrary but fixed and let $p,q\in\Djk$. Due to \eqref{eq_shift} and the assumption that $\etajk = \zetapq = \gapq = 0$, there holds
	\begin{equation*}
		\abs{\Itmpq(\xv - \kl{\kl{\svjk - \svzz} + \kl{\tvpq - \tvzz}})} = \abs{\Itmzz(\xv)}  \,.
	\end{equation*}
	Computing the center of mass on both sides we obtain
	\begin{equation}\label{eq_helper_04}
		\frac{
			\int_\Rt \xv \abs{\Itmpq(\xv - \kl{\kl{\svjk - \svzz} + \kl{\tvpq - \tvzz}})}     \, d\xv}
		{\int_\Rt \abs{ \Itmpq(\xv - \kl{\kl{\svjk - \svzz} + \kl{\tvpq - \tvzz}})} \, d \xv} 
		=
		\frac{
			\int_\Rt \xv \abs{\Itmzz(\xv)}     \, d\xv}
		{\int_\Rt \abs{ \Itmzz(\xv)} \, d \xv} 
		\,. 
	\end{equation}
	Next, note that by a simple change of variables in the integral we obtain
	\begin{equation*}
		\begin{split}
			&\int_\Rt \xv \abs{\Itmpq(\xv - \kl{\kl{\svjk - \svzz} + \kl{\tvpq - \tvzz}})} \, d\xv
			\\
			& \qquad =
			\int_\Rt \xv \abs{\Itmpq(\xv) } \, d\xv
			+
			\kl{\kl{\svjk - \svzz} + \kl{\tvpq - \tvzz}} \int_\Rt \abs{\Itmpq(\xv) } \, d\xv \,.
		\end{split}
	\end{equation*}
	Together with the fact that 
	\begin{equation*}
		\int_\Rt \abs{\Itmpq(\xv -\kl{\kl{\svjk - \svzz} + \kl{\tvpq - \tvzz}}) } \, d\xv = 
		\int_\Rt \abs{\Itmpq(\xv) \, } \, d\xv \,,
	\end{equation*}
	we obtain that \eqref{eq_helper_04} is equivalent to
	\begin{equation*}
		\frac{
			\int_\Rt \xv \abs{\Itmpq(\xv) } \, d\xv
			+
			\kl{\kl{\svjk - \svzz} + \kl{\tvpq - \tvzz}} \int_\Rt \abs{\Itmpq(\xv) } \, d\xv}
		{\int_\Rt \abs{\Itmpq(\xv) \, } \, d\xv}
		=
		\frac{
			\int_\Rt \xv \abs{\Itmzz(\xv)}     \, d\xv}
		{\int_\Rt \abs{ \Itmzz(\xv)} \, d \xv} 
		\,. 
	\end{equation*}
	which after a reordering of the terms yields
	\begin{equation*}
		\kl{\svjk - \svzz} + \kl{\tvpq - \tvzz} 
		=
		\frac{
			\int_\Rt \xv \abs{\Itmzz(\xv)}     \, d\xv}
		{\int_\Rt \abs{ \Itmzz(\xv)} \, d \xv} 
		-
		\frac{
			\int_\Rt \xv \abs{\Itmpq(\xv) } \, d\xv}
		{\int_\Rt \abs{\Itmpq(\xv) \, } \, d\xv}
		\,. 
	\end{equation*}
	Summing over all indices $p,q \in \Djk$ and noting that $\svjk$ is independent of $p,q$ we obtain
	\begin{equation*}
		\abs{\Djk} \kl{\svjk - \svzz} + \sum\limits_{p,q \in \Djk} \kl{\tvpq - \tvzz} 
		=
		\sum\limits_{p,q \in \Djk} \kl{
			\frac{
				\int_\Rt \xv \abs{\Itmzz(\xv)}     \, d\xv}
			{\int_\Rt \abs{ \Itmzz(\xv)} \, d \xv} 
			-
			\frac{
				\int_\Rt \xv \abs{\Itmpq(\xv) } \, d\xv}
			{\int_\Rt \abs{\Itmpq(\xv) \, } \, d\xv}}
		\,,
	\end{equation*}
	which together with \eqref{eq_spectral_phase_mean} now yields the assertion.
\end{proof}

The above proposition implies that under Assumption~\ref{ass_new} the relative slopes $(\svjk-\svzz)$ of the wavefront aberration $\phi$ on the subdomains $\Ojk$ can be (approximately) computed according to \eqref{subimages_slopes_new} by averaging the difference between the centers of mass of the subimages $\Itmpq$ and $\Itmzz$ over all indices $p,q \in \Djk$. This averaging is reminiscent of the averaging used in hardware based AO-OCT approaches described in Section~\ref{subsect_novel_DAC_motivation}. Furthermore, note the similarity of \eqref{subimages_slopes_new} in comparison to \eqref{subimages_slopes} for the subaperture correlation method.

\begin{remark}\label{remark_selfsim} 
As mentioned before, for our new DAC approach described above we only require \eqref{ass_F_abs_selfsimilar} instead of \eqref{ass_F_selfsimilar}, i.e., only the spectral amplitude $\abs{\F(I)}$ has to be self-similar and not the entire Fourier data $\F(I)$. However, since due to our model equation \eqref{model} the wavefront aberration $\phi$ acts only on the spectral phase $\vphi$ but not on the spectral amplitude $\abs{\F(I)}$, this self-similarity assumption can always be circumvented in practice as follows:
\begin{enumerate}
    \item Recall from \eqref{model} and \eqref{eq_F_I_polar} that $I^m(\xv) = \FI\kl{ e^{i \phi(\xiv) } e^{i \varphi(\xiv)} \abs{\F(I)(\xiv)} }(\xv) $.
    \item Given $\Im$, define the modified image $\hat{I}^m(\xv) := \FI\kl{ e^{i \phi(\xiv) } e^{i \varphi(\xiv)}  }(\xv)$.
    \item Note that the modified image $\hat{I}^m$ has the same spectral phase $\phi + \varphi$ as the original measured image $I^m$, but that its spectral phase $\vert\F(\hat{I}^m)(\xiv)\vert \equiv 1$ is now self-similar.
    \item Hence, our new DAC method can be applied to the modified image $\hat{I}^m$, which after applying steps 1.\ to 4.\ of Algorithm~\ref{algo_novel} yields and estimate of the aberration $\phi$.
    \item Correct the original image $I^m$ via \eqref{image_correction} using the computed wavefront aberration $\phi$.  
\end{enumerate}
Note that it is always possible to compute $\hat{I}^m$ directly from $I^m$, and thus using the above procedure the self-similarity assumption on the spectral phase becomes practically irrelevant.
\end{remark}

\section{Numerical Examples}\label{sect_numerics}

In this section, we demonstrate the behaviour of our novel DAC method summarized in Algorithm~\ref{algo_novel} on numerical examples based on both simulated and experimental data.

\subsection{Implementation}

In order to implement our DAC method on a computer, a proper discretization needs to be chosen. Since OCT images are in practice given as complex-valued images on an $M \times N$ pixel grid, a natural choice for discretizing the image functions $I$ and $I^m$ is to consider them as piecewise constant functions on that same pixel grid. The continuous Fourier transforms can then be approximated by discrete Fourier transforms (DFTs). 

The subdomains $\Ojk$ and $\Otpq$ used to define the subimages $\Imjk$ and $\Itmpq$ via cut-outs in the Fourier plane are then simply rectangles with a size of $J\times K$ and $P \times Q$ pixels, respectively. Since the DFT of an $M\times N$ pixel image again has the dimension $M\times N$, only a finite number of subdomains $\Ojk$ and $\Otpq$ need to be considered. Furthermore, note that we want to have exact subdivisions of our $M\times N$ pixel grid into subdomains $\Ojk$ as well as of the subdomains $\Ojk$ into the subdomains $\Otpq$. This is only possible if $M,N$ are multiples of $J,K$, and $J,K$ are multiple of $P,Q$, respectively. Hence, for a given OCT image one either has to choose $J,K$ and $P,Q$ appropriately, or change the size of the image itself, e.g., by cropping or by extending it with zeros.

The subimages $\Imjk$ and $\Itmpq$ can then be defined as in \eqref{def_Imjk} and \eqref{def_Imtpq}, respectively; see also Figure~\ref{fig_subimage_definition}. Note that in our implementation we consider the coordinate center to be located in the middle of the $M\times N$ pixel grid. Furthermore, instead of taking the cut-outs of the Fourier coefficients $D^m$, shifting them to the center of the $M\times N$ pixel grid and applying the inverse DFT, one can also embed them into a smaller $\Tilde{M} \times \Tilde{N}$ pixel grid before computing the inverse DFT. While this can improve the overall speed of the algorithm, it can have a negative effect on the reconstruction accuracy; see below.

In the second step of Algorithm~\ref{algo_novel}, the (relative) shift vectors $\vvjkpq = (\svjk-\svzz) + (\tvpq-\tvzz)$ need to be estimated by matching the subimages $\Itmpq$ and $\Itmzz$. Since in our implementation these images only have a finite size and were computed using DFTs, these shifts have to be understood as circular shifts and thus limit the largest accurately detectable relative wavefront slope $(\svjk-\svzz)$ to $\pm$ half the pixel size of the subapertures $\Ojk$ in each component direction. Furthermore, in order for the method to be sufficiently accurate, the estimation of these shifts has to be performed within a subpixel accuracy. Hence, in our implementation we have used a subpixel shift detection method called the \emph{single-step DFT algorithm} proposed in \cite{GuizarSicairos_Thurman_Fienup_2008}, which also allows to manually set the desired subpixel accuracy. Note that the shift detection accuracy can also be improved by choosing larger values of $\Tilde{M}$ and $\Tilde{N}$. However, this strongly increases the required computational costs and overall runtime of our algorithm. In contrast, by only working with local refinements the single-step DFT algorithm can achieve a much higher subpixel accuracy at a lower computational cost.

In the fourth step of Algorithm~\ref{algo_novel}, the wavefront aberration $\phi$ has to be reconstructed from the relative slopes $(\svjk-\svzz)$, for which a number of different algorithms are available. In our implementation, we use the CuReD method introduced in Section~\ref{subsect_SH_WFS}. Note that in general the wavefront reconstruction problem admits a unique solution only up to an additive constant. The CuReD algorithm selects a constant such that the resulting reconstruction has a zero mean. However, note that due to \eqref{image_correction} the choice of the additive constant has no influence on the absolute value of the corrected image.

\subsection{Simulated Data}

In this section, we apply our novel DAC method summarized in Algorithm~\ref{algo_novel} to numerical examples based on simulated data. The initial point-source-like aberration-free image $I$ is defined on an $350 \times 350$ pixel grid as follows: The DFT of the image is chosen such that it has a zero (spectral) phase, i.e., $\vphi(\xiv) \equiv 0$, and an amplitude corresponding to the Fourier amplitude of the classic cameraman test image. Then, different wavefront aberrations $\phi$ are added according to the imaging model \eqref{model}. Note that for an easier comparison, all wavefront aberrations were normalized to have a zero mean.

For reconstructing the wavefront, the $350 \times 350$ pixel grid was subdivided into $14^2$ subdomains $\Ojk$ each with a uniform size of $25 \times 25$ pixel. In turn, each of the subdomains $\Ojk$ was subdivided into $5^2$ subdomains $\Otpq$ with a uniform size of $5\times 5$ pixels. Hence, in total there are $196$ subdomains $\Ojk$ and $4900$ subdomains $\Otpq$. This resulted in a computation time of around $18$ seconds for each of the subsequent tests, which were performed on a standard laptop computer using a (non-optimized) Matlab implementation of our algorithm. 

\begin{figure}[ht!]
	\centering
	\includegraphics[width=\textwidth, trim = {7.5cm 2.5cm 6.5cm 1.5cm}, clip = true]{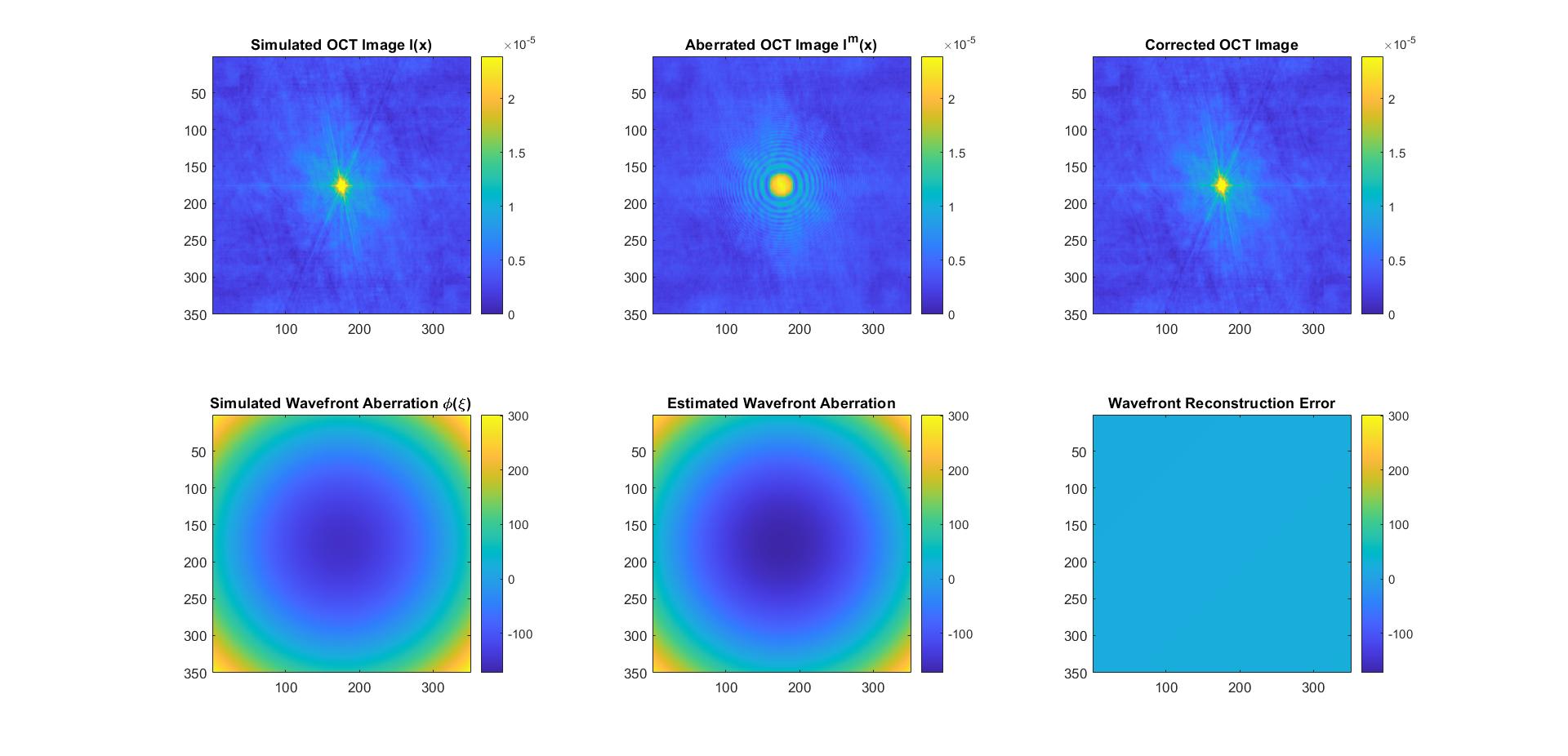}
	\caption{Simulation 1: OCT image with zero spectral phase and centered defocus wavefront aberration. Simulated, aberrated and corrected OCT image (top). Simulated and estimated wavefront aberration, and the reconstruction error, in radians (bottom).}
	\label{fig_sim_nospec_defocus}
\end{figure}

Figure~\ref{fig_sim_nospec_defocus} depicts the results of the first test, where the simulated wavefront aberration $\phi$ corresponds to a centered defocus. We find that the reconstructed wavefront aberration closely matches the simulated one, with a relative error of only $1.1\%$. Consequently, also the corrected image closely resembles the aberration-free one.

\begin{figure}[ht!]
	\centering
	\includegraphics[width=\textwidth, trim = {7.5cm 2.5cm 6.5cm 1.5cm}, clip = true]{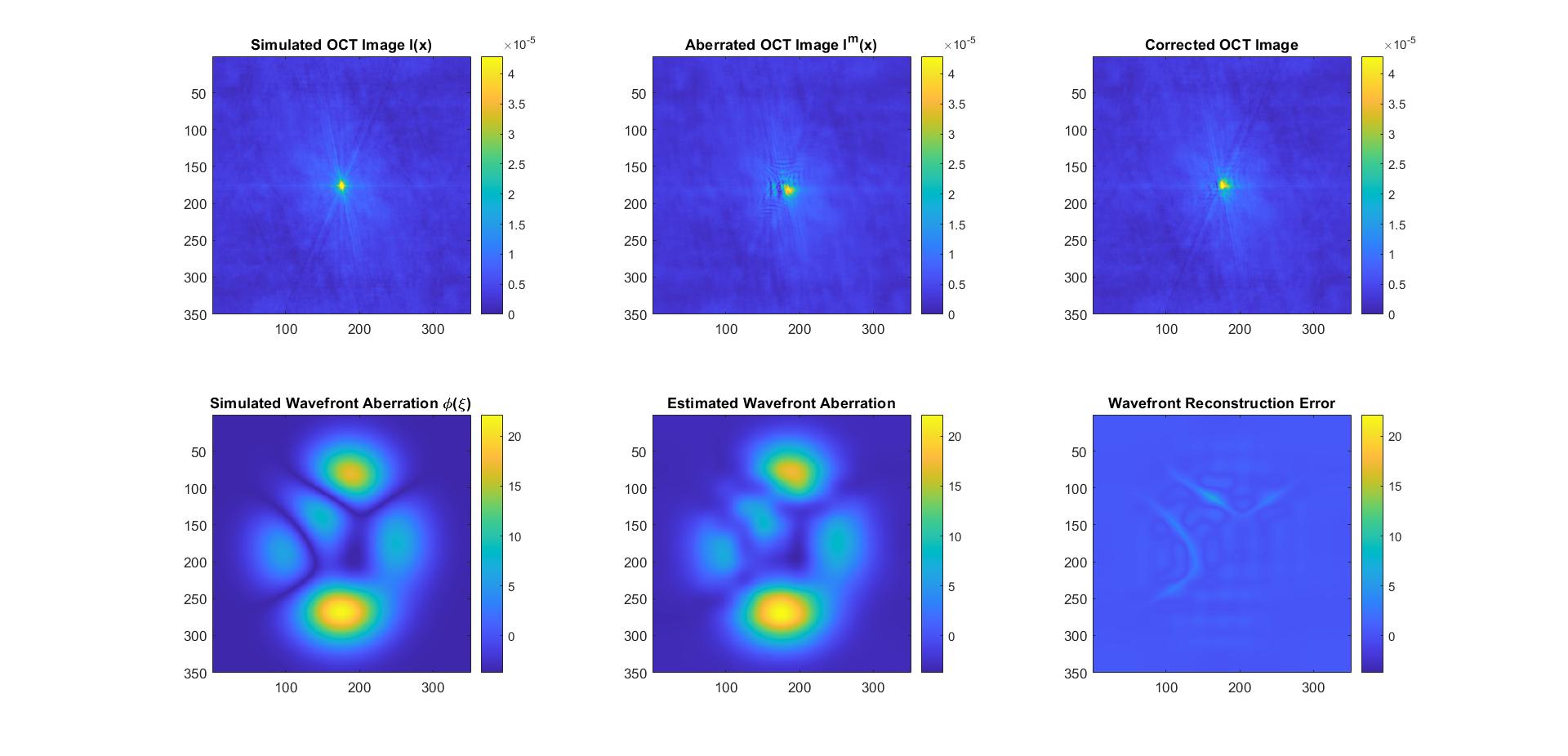}
	\caption{Simulation 2: OCT image with zero spectral phase and complex wavefront aberration. Simulated, aberrated and corrected OCT image (top). Simulated and estimated wavefront aberration, and the reconstruction error, in radians (bottom).}
	\label{fig_sim_nospec_peaks}
\end{figure}

Figure~\ref{fig_sim_nospec_peaks} depicts the results of the second test, which uses the same setup as before, but now with a more complex wavefront aberration $\phi$. Again the reconstructed wavefront is closely resembling the simulated one, with a relative error of $6.73\%$. Note that since subaperture-based DAC methods reconstruct the wavefront aberration from computed slopes on the subdomains $\Ojk$, the total number and size of those subdomains directly influences the amount of details of the aberrations which can be recovered. This explains the reconstruction error in the center-NW section of the wavefront aberration, in which its spatial variation is higher than in the remainder of the image. Note that in general a good balance has to be found when selecting the number/size of the subdomains $\Ojk$ and $\Otpq$. On the one hand, a large number of subdomains $\Ojk$ leads to a large number of slopes $\svjk$ from which the wavefront aberration can eventually be reconstructed. On the other hand, if the size of the subdomains $\Ojk$ is too small, only few subdomains $\Otpq$ can be selected, and thus the statistical assumption \eqref{eq_spectral_phase_mean} underlying the averaging process in our new DAC approach may no longer be valid. This effect is particularly relevant for the experimental OCT data considered below. Ultimately, this reflects the physical observation made in Section~\ref{subsect_novel_DAC_motivation} that typically the wavefront aberration lives on a lower frequency than the spectral phase, in that the boundary between high and low frequency needs to be reflected in the size/number of subdomains $\Ojk$ and $\Otpq$. In other words, wavefront aberrations of a too large spatial variation can not be distinguished from the spectral phase of the object, and thus also not reconstructed or corrected for. Finally, note that with an increased number of subdomains the overall computational cost increases, which can pose a practical limitation. 

For the next two experiments, we use the same wavefront aberrations as considered before, but now with a different aberration-free image $I$. In particular, we choose $I$ such that it has the same spectral amplitude $\abs{\F(I)(\xiv)}$ as before, but now add a normally distributed spectral phase $\vphi$ with a zero mean and a standard deviation of $0.5$.

\begin{figure}[ht!]
	\centering
	\includegraphics[width=\textwidth, trim = {7.5cm 2.5cm 6.5cm 1.5cm}, clip = true]{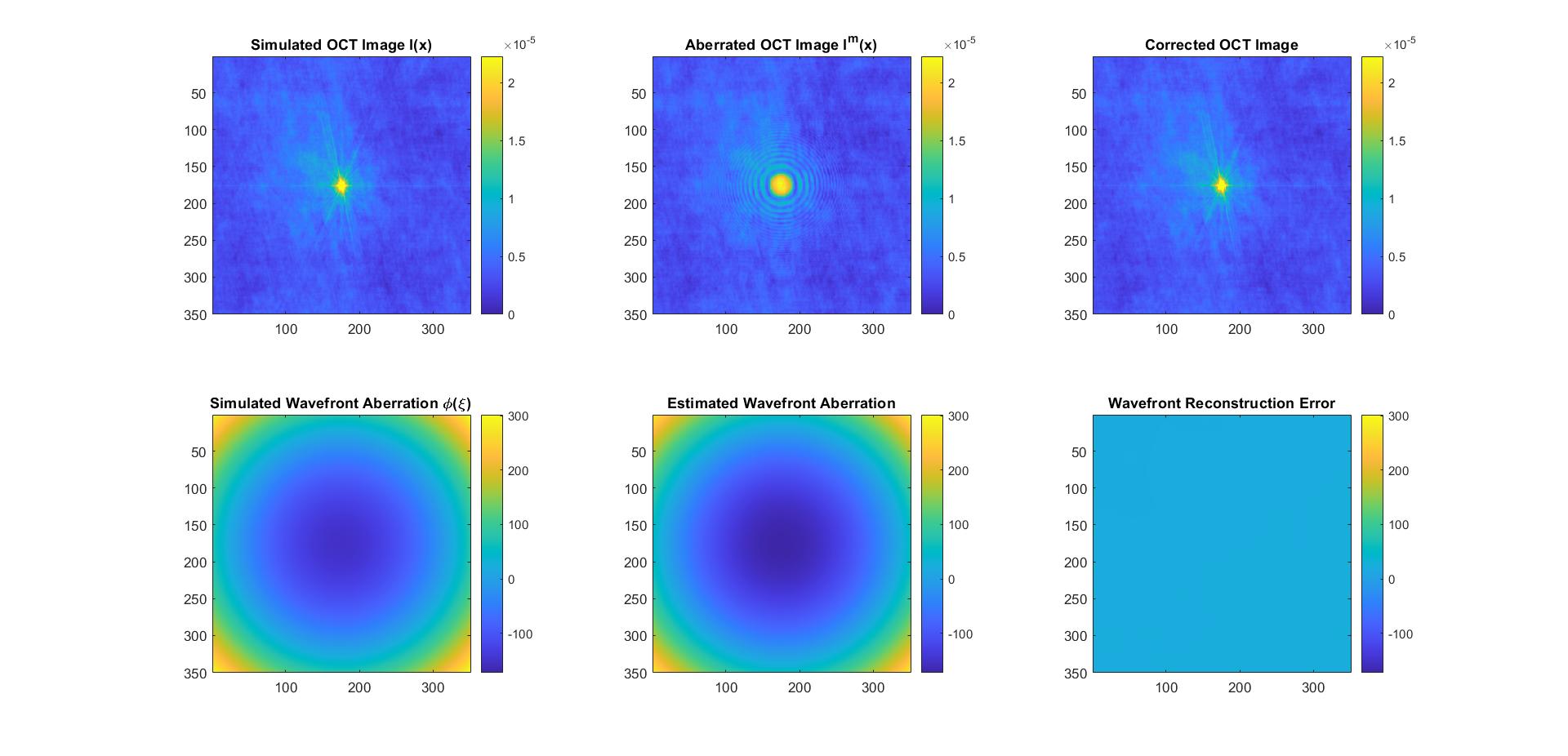}
	\caption{Simulation 3: OCT image with random spectral phase and centered defocus wavefront aberration. Simulated, aberrated and corrected OCT image (top). Simulated and estimated wavefront aberration, and the reconstruction error, in radians (bottom).}
	\label{fig_sim_randspec_defocus}
\end{figure}

\begin{figure}[ht!]
	\centering
	\includegraphics[width=\textwidth, trim = {7.5cm 2.5cm 6.5cm 1.5cm}, clip = true]{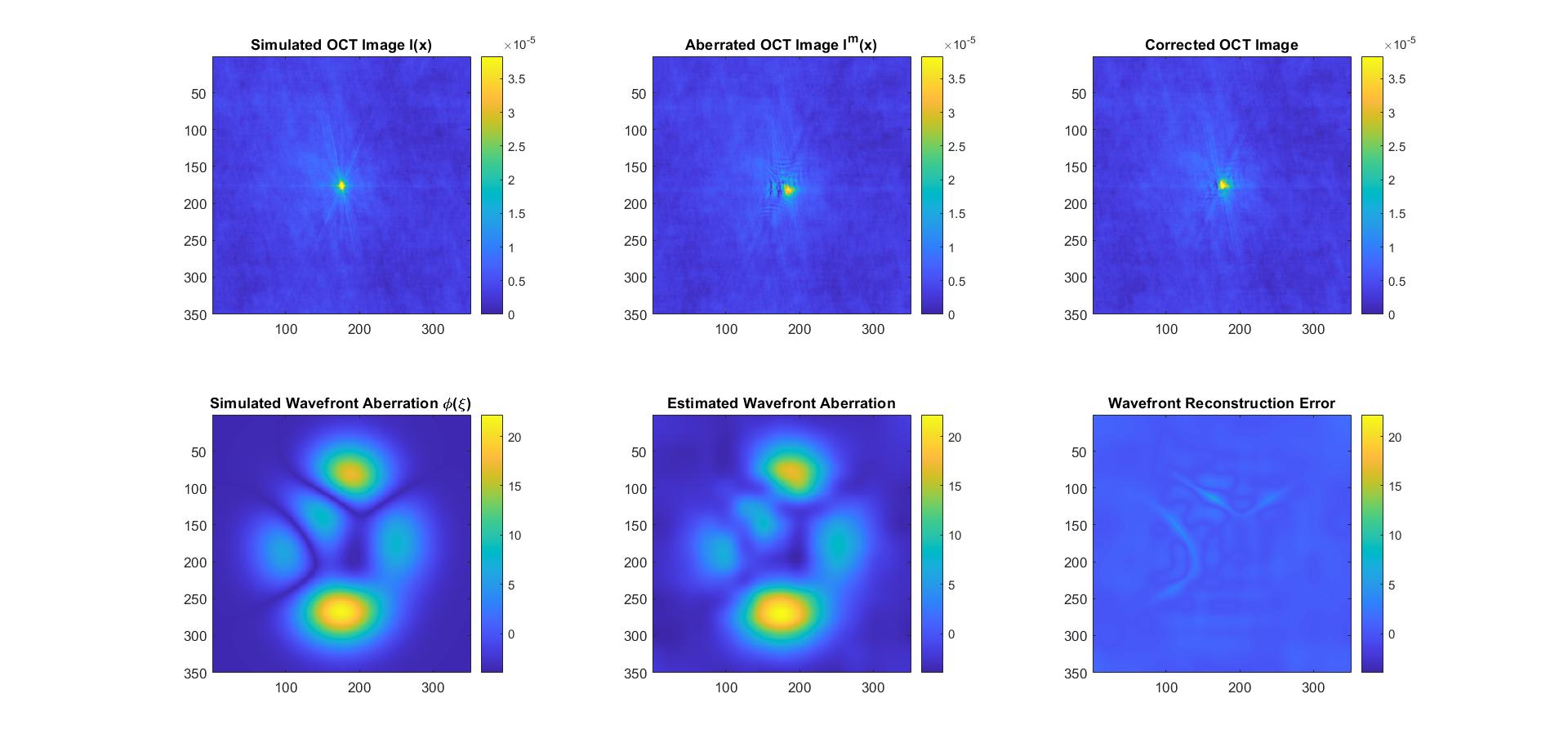}
	\caption{Simulation 4: OCT image with random spectral phase and complex wavefront aberration. Simulated, aberrated and corrected OCT image (top). Simulated and estimated wavefront aberration, and the reconstruction error, in radians (bottom).}
	\label{fig_sim_randspec_peaks}
\end{figure}

Figure~\ref{fig_sim_randspec_defocus} and Figure~\ref{fig_sim_randspec_peaks} depict the results of the third and fourth experiments, which use this new aberration-free image $I$ together with wavefront aberrations $\phi$ from the first and second experiment, respectively. As expected, due to the added random spectral phase $\vphi$, the reconstructed wavefronts are now less accurate than before, due to the slopes $\tvpq$ averaged out by our method being only approximately constant on the subdomains $\Otpq$. However, the reconstructed wavefronts still reasonably match the simulated ones, with the relative errors now being $1.1\%$ and $7.1\%$, respectively.

\subsection{Experimental Data}

In this section, we apply our novel DAC method summarized in Algorithm~\ref{algo_novel} to measurement data from an actual OCT experiment. In particular, the image $I^m$ is a manually defocused OCT image of multiple microbeads on a glass surface, recorded with the OCT system described in \cite{Kumar_Georgiev_Salas_Leitgeb_2021}.   

\begin{figure}[ht!]
	\centering
	\includegraphics[width=\textwidth, trim = {7.5cm 6.5cm 6.5cm 4.5cm}, clip = true]{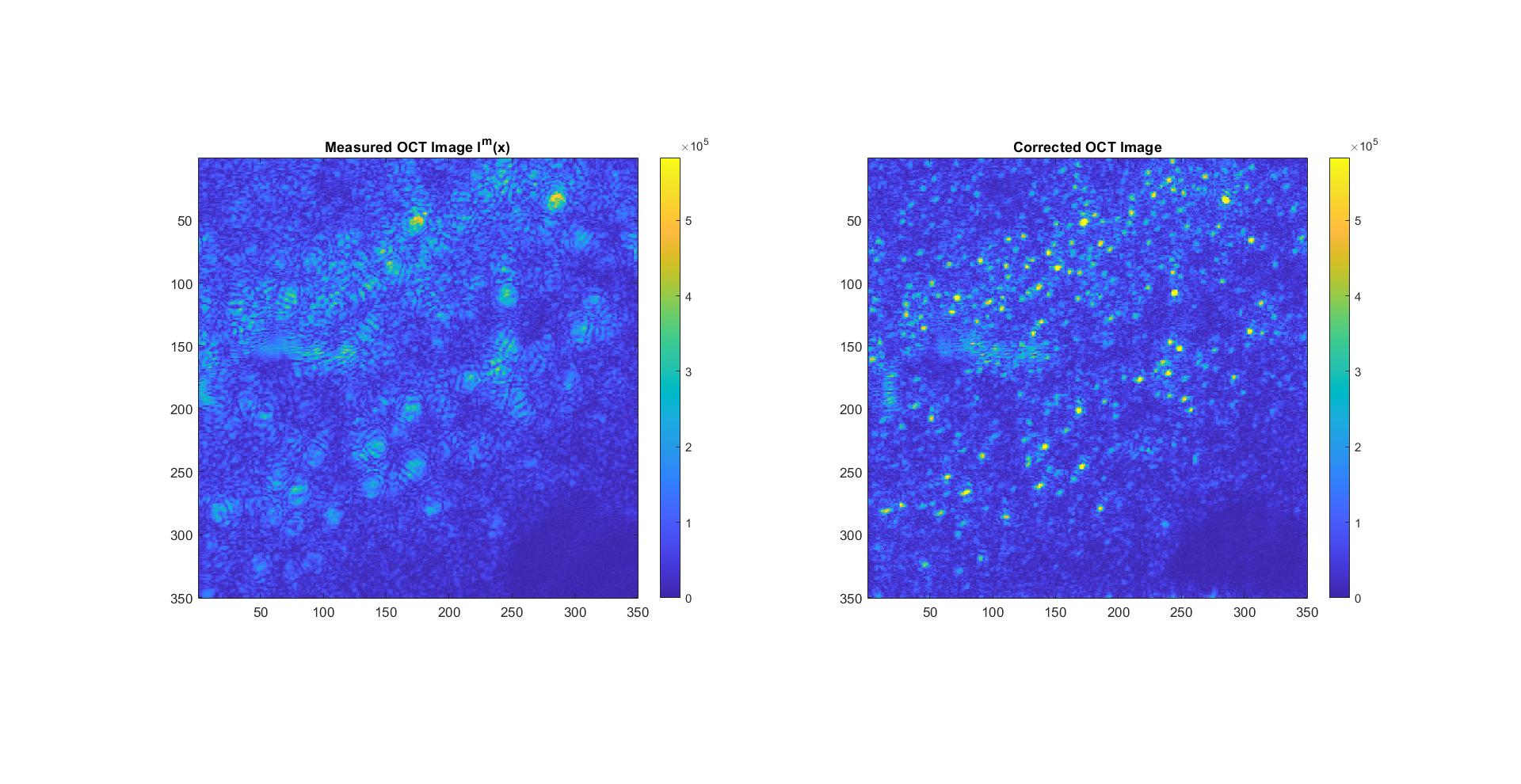}
	\includegraphics[width=\textwidth, trim = {7.5cm 6.5cm 6.5cm 4.5cm}, clip = true]{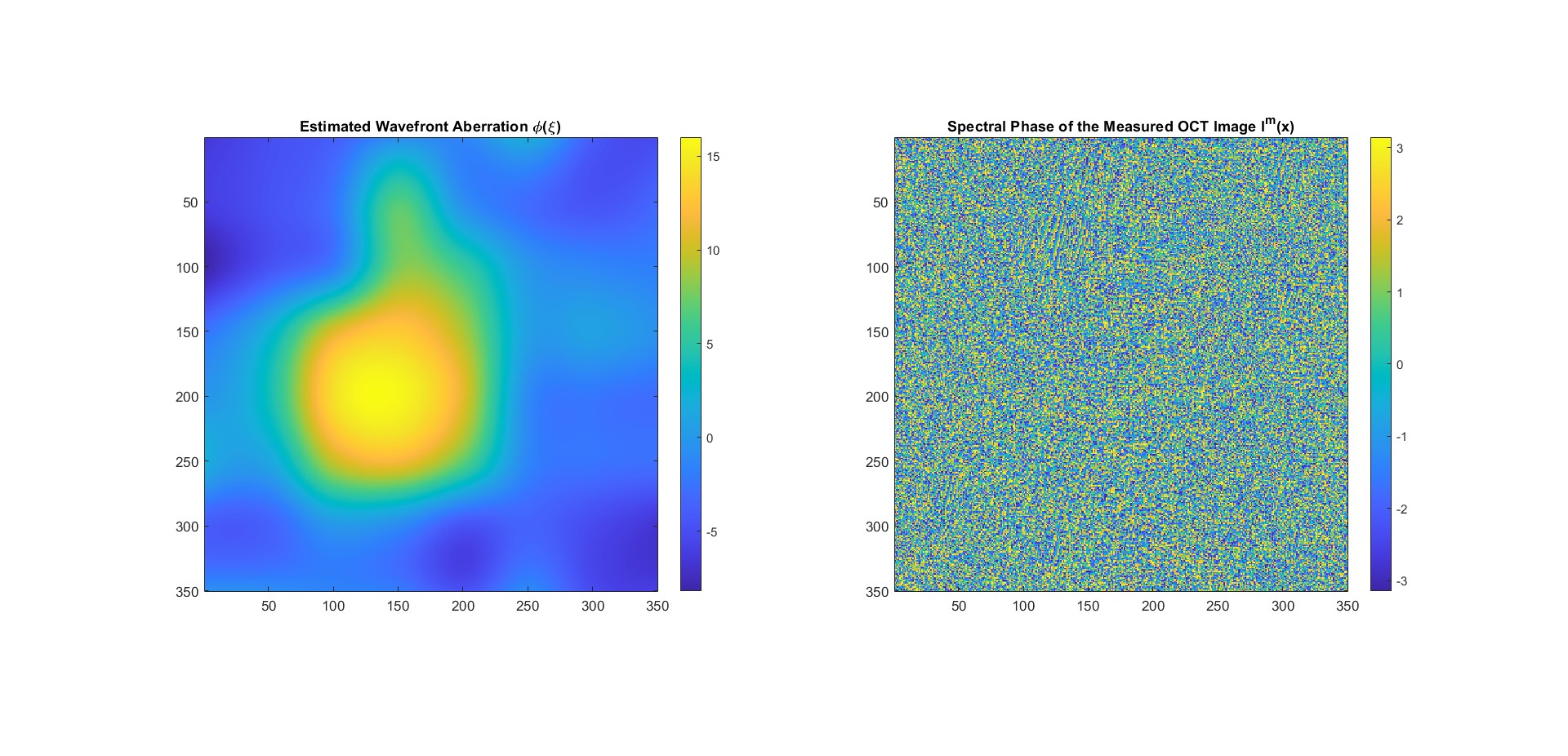}
	\caption{Results of our novel DAC approach applied to experimental OCT data. Top: measured and corrected en-face OCT image of microbeads. Bottom: estimated wavefront aberration in radians and spectral phase $\arg(\F(\Im))$ of the measured image (bottom).}
	\label{fig_test_exp}
\end{figure}

The experimental OCT image is given on a $350 \times 350$ pixel grid, which for the wavefront reconstruction was subdivided into $7^2$ subdomains $\Ojk$ with a uniform size of $50 \times 50$ pixel. In turn, each of those subdomains $\Ojk$ was subdivided into $2^2$ subdomains $\Otpq$ with a uniform size of $25 \times 25$ pixel. Hence, in total there are $14$ subdomains $\Ojk$ and $56$ subdomains $\Otpq$. Figure~\ref{fig_test_exp} depicts the results of applying our novel DAC method to the experimental OCT data, obtained within $9$ seconds of computation time. Note first that the spectral phase $\arg(\F(\Im))$ depicted in the bottom right figure is mainly dominated by the spectral phase $\vphi$ of the object, and that the experimentally applied defocus $\phi$ is not directly visible. Nevertheless, the defocus does have a sufficiently strong effect on the subimages $\Itmpq$ such that our algorithm manages to reconstruct it accurately from their relative shifts. As a result, in comparison to the measured OCT image $\Im$, the microbead structure is clearly resolved in the digitally corrected OCT image. Note that the striped structure near the top center of the spectral phase $\arg(\F(\Im))$ slightly affects the reconstructed aberration in that area. If our algorithm were applied with a much larger number of subapertures, this and the similarly striped area in the middle right area of the spectral phase start to dominate the shift of the corresponding subimages $\Itmpq$, ultimately degrading the reconstruction quality. This is also related to the fact that for a too large number of subdivisions the assumptions underlying our algorithm may no longer be valid. However, as the presented experimental results show, with a reasonable choice of subdivisions good aberration correction can be obtained completely digitally also on experimental data. The optical bandwidth of the source itself does not enter the formalism of the digital aberration correction. In OCT the optical bandwidth determines the axial resolution, but the lateral resolution is determined by the numerical aperture of the detection system. Large optical bandwidths might be critical due to chromatic aberrations that would need different strategies to be digitally corrected for. However, those aberrations are usually taken care of by proper design of the system, such as the quality of used objectives or additional correction lenses. In our specific experimental setup, the relatively small source bandwidth did not introduce appreciable chromatic errors.

\section{Conclusion}\label{sect_conclusion}

In this paper, we considered subaperture-based DAC approaches for the hardware-free wavefront aberration correction of OCT images. In particular, we introduced a mathematical framework for describing this class of approaches, which also led to new insights for the subaperture-correlation method. Furthermore, based on the insight gained by this mathematical description, we developed a novel DAC approach requiring only minimal statistical assumptions on the spectral phase of the scanned object. Finally, we demonstrated the applicability of our novel DAC approach via numerical examples based on both simulated and experimental data.

\section{Support}

The authors were funded by the Austrian Science Fund (FWF): project F6803-N36 (RL, MP), F6805-N36 (SH, RR), and F6807-N36 (ES), within the SFB F 68 ``Tomography Across the Scales''.

\bibliographystyle{plain}
{\footnotesize
	\bibliography{mybib}
}

\end{document}